\newtheorem{remark}{Remark}
\newtheorem{lemma}{Lemma}
\newtheorem{assumption}{Assumption}
\newtheorem{proposition}{Proposition}
\newtheorem{proof}{\textbf{Proof}}
\def\begcen{\begin{center}}
\def\endcen{\end{center}}
\newcommand{\col}{\mbox{col}}
\def\bw{{\bf w}}
\def\bx{{\bf x}}
\def\bv{{\bf v}}
\def\bi{{\bf i}}
\def\bc{{\bf c}}
\def\call{{\mathcal L}}
\def\hal{\frac{1}{2}}
\def\L2e{{\mathcal L}_{2e}}
\def\rea{\mathbb{R}}
\def\et{\epsilon_t}
\def\l2{{\mathcal L}_2}
\def\l2e{{\cal L}_{2e}}
\def\rea{\mathbb{R}}
\def\begequarr{\begin{eqnarray}}
\def\endequarr{\end{eqnarray}}
\def\begequarrs{\begin{eqnarray*}}
\def\endequarrs{\end{eqnarray*}}
\def\begarr{\begin{array}}
\def\endarr{\end{array}}
\def\begequ{\begin{equation}}
\def\endequ{\end{equation}}
\def\lab{\label}
\def\begdes{\begin{description}}
\def\enddes{\end{description}}
\def\begenu{\begin{enumerate}}
\def\begite{\begin{itemize}}
\def\endite{\end{itemize}}
\def\endenu{\end{enumerate}}
\def\lef[{\left[\begin{array}}
\def\rig]{\end{array}\right]}
\def\qed{\hfill$\Box \Box \Box$}
\def\begcen{\begin{center}}
\def\endcen{\end{center}}
\def\begrem{\begin{remark}\rm}
\def\endrem{\end{remark}}
\def\TAC{{\it IEEE Trans. Automatic Control}}
\def\TIE{{\it IEEE Trans. Industrial Electronics}}
\def\AUT{{\it Automatica}}
\def\CEP{{\it Control Engineering Practice}}
\def\begmat#1{\begin{bmatrix}#1\end{bmatrix}}
\def\begali#1{\begin{align}{#1}\end{align}}
\def\begalis#1{\begin{align*}{#1}\end{align*}}
\begin{document}

\begin{frontmatter}

\title{A Globally Exponentially Stable Position Observer for Interior Permanent Magnet Synchronous Motors} 

\thanks[footnoteinfo]{Corresponding author: Bowen Yi. }
\author[L2S,ITMO]{Romeo Ortega}\ead{ortega@lss.supelec.fr},
\author[SJTU,L2S]{Bowen Yi}\ead{b.yi@outlook.com},               
\author[belgrade]{Slobodan Vukosavic}\ead{boban@etf.rs},
\author[POSTECH]{Kwanghee Nam}\ead{kwnam@postech.ac.kr},
\author[POSTECH]{Jongwon Choi}\ead{jongwon@postech.ac.kr}

\address[L2S]{Laboratoire des Signaux et Syst\`emes, CNRS-CentraleSup\'elec, Gif-sur-Yvette 91192, France}
\address[ITMO]{Department of Control Systems and Informatics, ITMO University, Saint Petersburg 197101, Russia}
\address[SJTU]{Australian Centre for Field Robotics, The University of Sydney, Sydney, NSW 2006, Australia}
\address[belgrade]{Electrical Engineering Department, University of Belgrade, Belgrade 11000, Serbia}
\address[POSTECH]{Department of Electrical Engineering, Pohang University of Science and Technology
(POSTECH), Pohang 790-784, Korea}

\begin{keyword}                           
Observers Design; Nonlinear Systems; PMSM              
\end{keyword}                             

\begin{abstract}                          
The design of a position observer for the interior permanent magnet synchronous motor is a challenging problem that, in spite of many research efforts, remained open for a long time. In this paper we present the first {\em globally exponentially convergent} solution to it, assuming that the saliency is not too large. As expected in all observer tasks, a persistency of excitation condition is imposed. Conditions on the operation of the motor, under which it is verified, are given. In particular, it is shown that at {\em rotor standstill}---when the system is not observable---it is possible to inject a probing signal to enforce the persistent excitation condition. {The  high performance of the proposed observer, in standstill and high speed regions, is verified by extensive series of test-runs on an experimental setup.}
\end{abstract}
\end{frontmatter}

%
\section{Introduction}
\label{sec1}
%
Electrical motors are benchmark systems that have been intensively studied by control researchers, who have produced more than six research monographs on the topic \cite{CHI,GLUDEL,KHOetalbook,MARetalbook,NAMbook,ORTetalbook} and hundreds of publications in our leading research journals in systems and control.  One of the most challenging problems that appears in this field is the so-called {\em sensorless control}, that is, the control of the motor measuring only the electrical coordinates. The problem has an enormous practical and economical significance as the shaft-sensorless operation voids the need to install, run and interface a dedicated shaft sensor and, theoretically, it requires the design of an observer for a highly nonlinear system. Many well-known control theorists have contributed to the mathematical solution of this problem, mainly for induction \cite{GLUDEL,MARetal} and {\em surface-mount} [also called non-salient] permanent magnet synchronous motors (SPMSMs) \cite{BERPRAaut,CHOetaltpe,ORTetalcst,TOMVER,VERetal}. For a review of the literature, the reader is referred to the aforementioned research monographs and papers.

Because of the reluctance torque and higher power density, as well as their cheaper production cost, it has been recognized in recent years that {\em interior} [also called salient] PMSMs, are more suitable for industrial and home appliances than SPMSMs or induction motors---becoming the {\em de facto} standard in these applications \cite[Subsection 6.1.4]{NAMbook}. The dynamic equations that describe the behavior of IPMSMs are far more complicated than those of SPMSMs. Indeed, because of the rotor saliency, the model {must} incorporate the effect of self and mutual inductances, which vary with an electrical angle between phases and rotor axis---see the discussions in \cite[Subchapter 6.2]{NAMbook} and \cite[Section VI]{ORTetalcst}. In spite of an intensive research activity in the industrial electronics and the control theory communities the problem of designing a globally stable observer for IPMSMs has remained open for many years. This fact is openly recognized by one of the leading authorities in observer design in \cite[Section 6]{BERPRA} where it is stated:

{\em  ``Extension to {salient} models. We are unaware of any observer for this case".}

 The main purpose of this paper is to present the first globally convergent observer for IPMSMs.

The observer proposed in this paper is a gradient-descent search---an approach first proposed in observer theory in \cite{SHI} and applied for the first time to PMSMs in \cite{ORTetalcst}. Unfortunately, the simple construction proposed in \cite{ORTetalcst} is not applicable for IPMSMs. Indeed, in  \cite{ORTetalcst} it is shown that flux and current in SPMSMs verify an algebraic relation, which is {\em independent} of the mechanical coordinates. From this algebraic relation a quadratic criterion to be minimized---whose gradient is computable from the electrical coordinates---can be easily constructed. Unfortunately, this algebraic relation in IPMSMs depends on the {\em rotor position}---a fact that is discussed in Section \ref{sec20}.

To be able to construct a gradient descent-based observer for IPMSMs it was proposed in \cite{CHOetaltie} to follow the approach pursued in \cite{BOBetal} for SPMSMs, namely, to derive {\em via filtering}, a linear regression equation for the flux. In  \cite{CHOetaltie} it was shown that, when applied to the \emph{active flux} of the IPMSM \cite{BOLetal}, the procedure of  \cite{BOBetal} yields an {\em additively perturbed} linear regression. Proceeding from this regression, and neglecting the disturbance term, a gradient-descent search observer was proposed in  \cite{CHOetaltie}. Although experimental evidence proved the good performance of this observer, its theoretical analysis was hampered by the presence of the neglected perturbation term. In \cite{CHONAM} another observer that takes into account the presence of the disturbance was proposed. Extensive experimental evidence proved the high performance of this observer---incorporating at low speeds the signal injection feature commonly used in sensorless control \cite{NAMbook}. However, because of the complexity of the observer dynamics, it was not possible to carry-out the stability analysis. See Section \ref{sec6} for a discussion on this matter.

In this paper we propose a modification to the observer proposed in \cite{CHONAM} for which a complete theoretical analysis allows us to establish its {\em global exponential stability} (GES). As usual in all observer tasks, a persistency of excitation (PE) condition is imposed. Conditions on the operation of the motor, under which it is verified, are given. It is shown that at {\em standstill}---when the flux is not observable \cite{KOTetal}---PE is enforced injecting a probing signal as done in \cite{CHONAM,NAMbook,YIetal}.

The remainder of the paper is organized as follows.  In Section \ref{sec20} we present the model of the PMSMs and explain why the approach adopted in \cite{ORTetalcst} is not applicable to IPMSMs. Section \ref{sec2} presents the linear regression representation from \cite{CHONAM}, that is the basis for our observer design, which is given in Section \ref{sec3}. In Section \ref{sec4} we discuss the PE assumption required by our main result.  In Section \ref{sec5} we present some simulation and experimental results, which illustrate the {performance}---and limitations---of the proposed observer, as well as its operation at standstill injecting a probing signal. The paper is wrapped-up in  Section \ref{sec6} with some concluding remarks and future research, including a discussion on the observer given in \cite{CHONAM}.

\begin{tcolorbox}
\begin{center}
 {\bf Nomenclature}
\end{center}
\vspace{0.2cm}
  \renewcommand\arraystretch{1.4}
\small
\begin{tabular}{ll}
$\alpha\beta$ & Stationary axis reference frame quantities\\
$\bv,\bi \in \rea^2$ & Stator voltage and current [V, A] \\
$\lambda \in \rea^2$ & Stator flux [Wb]\\
$\bx \in \rea^2$ & Active flux [Wb] \\
$\theta \in {\mathbb S}$ & Rotor flux angle [rad] \\
$R$    & Stator winding resistance [$\Omega$]\\
$\psi_m$ & PM flux linkage constant  \\
$L_d,L_q$ & $d$ and $q$-axis inductances [H] \\
$L_0$ & Inductance difference $L_0:= L_d - L_q$ [H] \\
$L_s$ & {Averaged inductance} $L_s:= {L_d + L_q \over 2}$ [H] \\
$|\cdot|$ & Euclidean norm of a vector \\
$p$ & Differential operator $p:= {d \over dt}$\\
$G(p)[w]$ & Action of $G(p) \in \rea(p)$ on a signal $w(t)$\\
$\nabla_x$ & Gradient transpose $\nabla_x :=\big({\partial \over \partial x}\big)^\top$\\
$\tilde{(\cdot)}$ & Estimation error defined as $\hat{(\cdot)}-(\cdot)$ \\
$I_{n}$ & $n\times n$ identity matrix 
\end{tabular}
\end{tcolorbox}
%
\section{Difference Between SPMSM and IPMSM}
\lab{sec20}
%
In this section we present the mathematical models of the SPMSM and IPMSM and explain why the simple approach, proposed in \cite{ORTetalcst}  for observer design of the former, is not applicable for the IPMSM. For both motors the magnetic energy stored within magnetic circuits is given as
$$
H_E(\lambda,\theta) = \hal [\lambda - \psi_m \bc(\theta)]^\top \call^{-1}(\theta) [\lambda - \psi_m \bc(\theta)],
$$
where $\call(\theta) \in \rea^{2 \times 2}$ is the generalized inductance matrix, defined as
\begin{equation*}
\label{sys}
\call(\theta)=\left\{
\begin{aligned}
 & L_s I_2 & \mbox{for\;the\;SPMSM}\;(L_d=L_q)\\
 & \big[L_s I_2 + {L_0 \over 2} Q(2\theta) \big]& \mbox{for\;the\;IPMSM}\;(L_d \neq L_q),
\end{aligned}
\right.
\end{equation*}
where
$$
Q(2 \theta) := \lef[{cc} \cos(2 \theta) & \sin(2\theta) \\   \sin(2\theta) &  -\cos(2
\theta)\rig],
$$
and we defined $\bc(\theta):= \col(\cos\theta, \sin\theta)$. The electrical dynamics (in the stationary $\alpha\beta$ frame) is given by Faraday's Law
\begequ
\label{ipmsm1}
\begin{aligned}
  \dot{\lambda} & = - R\bi+ \bv
\end{aligned}
\endequ
with the constitutive relation
$$
\bi=\nabla_\lambda H_E(\lambda,\theta)=\left\{
\begin{aligned}
 & {1 \over L_s}[\lambda - \psi_m \bc(\theta)]  & \mbox{(SPMSM)} \\
 & \call^{-1}(\theta) [\lambda - \psi_m \bc(\theta)]  &  \mbox{(IPMSM)}.
\end{aligned}
\right.
$$
We underline the fact that $L_0=0$ for SPMSM considerably simplifying the equations.

Noting that the SPMSM verifies the algebraic constraint
\begequ
\lab{algcon}
|\lambda - L_s \bi|^2 - \psi_m^2=0,
\endequ
the flux observer for SPMSM proposed in \cite{ORTetalcst} is a {\em gradient-descent} search for the minimization of the quadratic criterion
$$
J(\hat \lambda):={1 \over 4}\Big(|\hat \lambda - L_s \bi|^2 - \psi_m^2\Big)^2,
$$
leading to
\begali{
\dot {\hat \lambda}
& = \bv - R \bi -\gamma (|\hat \lambda - L_s \bi|^2 - \psi_m^2)(\hat \lambda - L_s \bi).
\lab{obsort}
}
As shown in \cite{ORTetalcst} the flux observer \eqref{obsort} has some remarkable stability properties, and its excellent performance has been validated experimentally \cite{NAMbook}. See also \cite{MALetal} where it is shown that the following slight variation of \eqref{obsort}
$$
\dot {\hat \lambda}= \bv - R \bi -\gamma\max\{0,|\hat \lambda - L_s \bi|^2 - \psi_m^2\}(\hat \lambda - L_s \bi)
$$
ensures global convergence.

Unfortunately, the approach proposed above is not applicable to IPMSMs. Indeed, although the IPMSM still verifies an algebraic constraint similar to \eqref{algcon}, that is
$$
|\lambda - \call(\theta) \bi|^2 - \psi_m^2=0,
$$
it is not possible to compute its gradient without the knowledge of $\theta$. For this reason, in this paper we proceed as done in \cite{BOBetal,CHOetaltie} and look for the generation, via filtering, of a linear regression of $\lambda$.

\section{A Linear Regression Equation of the IPMSM}
\lab{sec2}
%
As indicated in the previous section, the electrical dynamics of the IPMSM is given by Faraday's Law \eqref{ipmsm1}, together with the constitutive relation
\begequ
\label{ipmsm2}
\begin{aligned}
\lambda & = \big[L_s I_2 + {L_0 \over 2} Q(2\theta) \big]\bi +  \psi_m \bc(\theta).
\end{aligned}
\endequ
Some simple calculations {\cite{BERPRA,CHOetaltie}} show that \eqref{ipmsm2} may be written as
\begequ
\label{lamithe}
\begin{aligned}
  \lambda & = L_q \bi + (L_0 \bi^\top \bc(\theta) + \psi_m) \bc(\theta),
\end{aligned}
\endequ
In \cite{CHOetaltie} it is proposed to obtain the rotor angle via the estimation of the \emph{active flux} of the IPMSM, defined in \cite{BOLetal} as
\begequ
\label{x}
\bx:= \lambda - L_q \bi.
\endequ
The motivation to consider this signal is twofold. First, from \eqref{lamithe} and \eqref{x}, we have that
$$
\bx = [L_0\bi^\top \bc(\theta) + \psi_m] \bc(\theta).
$$
Consequently,
\begequ
\lab{xsqu}
|\bx|^2= [L_0\bi^\top \bc(\theta) + \psi_m]^2.
\endequ
We make at this point two assumptions, which are often satisfied in practice and are, therefore, implicitly made in the applications publications \cite{CHOetaltpe,CHOetaltie,CHONAM}. First, since the inductance difference $L_0$, a measure of anisotropy of the machine, is usually {\em very small}, we make the following.

\begin{assumption}\rm
\lab{ass1}
The current $\bi$ verifies $|L_0 \bi | < \psi_m$.
\end{assumption}

Second, to make mathematically rigorous the subsequent analysis, we also need the following.

\begin{assumption}
\rm\label{ass2}
The motor \eqref{ipmsm1}, \eqref{ipmsm2} operates in a mode guaranteeing that all signals $\bi$, $\bv$ and $\lambda$ are bounded and that $|\bx| \ge x_{\min} >0$ .
\end{assumption}

Given these assumptions we can write
$$
{\bx \over |\bx|}= \bc(\theta),
$$
and the rotor angle is easily reconstructed from $\bx$ via
$$
\theta =  {\rm atan2}(\bx_2,\bx_1),
$$
where ${\rm atan2}(\cdot,\cdot)$ is the well-known ``2-argument $\arctan$" function. A second, and most important motivation, is contained in the following lemma, whose proof was established in \cite{CHONAM} and, to make the paper self-contained, it is also given below.

\begin{lemma}
\label{lem1}\rm
The electrical dynamics of the IPMSM \eqref{ipmsm1}, \eqref{ipmsm2} satisfies the following (perturbed) linear regression equation
\begequ
\label{flux_dyn}
\begin{aligned}
y & = \Phi^\top \bx + d + \et
\end{aligned}
\endequ
with the (unknown) perturbing signal $d$ given by
$$
 d := - \ell {\alpha p \over p+ \alpha} \Big[\bi^\top {\bx \over |\bx|}\Big],
$$
where $\alpha>0$ is a {\em tuning} parameter, $\et$ is an exponentially decaying term caused by the initial condition of the filters\footnote{Following standard practice, we neglect these terms in the sequel.}, $\ell:=\psi_m L_0$, and we defined the {\em measurable} signals $y$ and $\Phi$ as
\begali{
\nonumber
 y & := L_0 \Big( {\alpha \over p+\alpha}[\bi]\Big)^\top \Omega_1 + {1\over \alpha}|\Omega_1|^2 + {1\over p+\alpha}[\Omega_2^\top \Omega_1]\\
\nonumber
\Omega_1 & := {\alpha \over p + \alpha} [\bv - R\bi - L_q p\bi] \\
\nonumber
 \Omega_2 & := \Omega_1 - L_0 {\alpha p \over p + \alpha}[\bi]\\
 \Phi & := \Omega_1 + \Omega_2.
 \lab{yphi}
}
\end{lemma}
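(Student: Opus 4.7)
The plan is to build the regression by filtering the two natural forms of Faraday's law for the IPMSM and by exploiting one algebraic identity hidden in the constitutive relation. Since $\lambda=L_q\bi+\bx$, equation \eqref{ipmsm1} gives $p\bx=\bv-R\bi-L_q p\bi$, so $\Omega_1=F(p)[p\bx]$ with $F(p):=\alpha/(p+\alpha)$. Introducing the companion signal $\bx_d:=\lambda-L_d\bi=\bx-L_0\bi$, I also get $p\bx_d=\bv-R\bi-L_d p\bi$, i.e. $\Omega_2=F(p)[p\bx_d]$. Using $pF(p)=\alpha(1-F(p))$ and writing $\bar{(\cdot)}:=F(p)[\cdot]$, my first step is to record that, modulo an exponentially decaying initial-condition term $\epsilon_t$,
\[
\Omega_1=\alpha(\bx-\bar\bx),\qquad \Omega_2=\alpha(\bx_d-\bar\bx_d),\qquad \bar\bx_d=\bar\bx-L_0\bar\bi.
\]

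Next I would extract the algebraic content of \eqref{lamithe}. Writing $\bx=(L_0\bi^\top\bc(\theta)+\psi_m)\bc(\theta)$ shows, via Assumption \ref{ass1}, that $|\bx|=L_0\bi^\top\bc(\theta)+\psi_m$, and a short computation then yields the scalar identity $\bx_d^\top\bx=\psi_m|\bx|$. This plays the role that \eqref{algcon} played in the SPMSM case and is the only point at which the geometry of the model enters the proof. I would then substitute the representations of $\Omega_1$, $\Omega_2$ into the definitions \eqref{yphi} and expand $y-\Phi^\top\bx$: the cross-term $\alpha L_0\bar\bi^\top\bx$ coming from $L_0\bar\bi^\top\Omega_1$ is absorbed by the contribution of $\Omega_2^\top\bx$ through $\bar\bx_d=\bar\bx-L_0\bar\bi$, the $|\bx|^2$ pieces cancel, and the identity $\bx_d^\top\bx=\psi_m|\bx|$ produces an $\alpha\psi_m|\bx|$ term, leaving
\[
y-\Phi^\top\bx=\alpha\bar\bx^\top\bar\bx_d-\alpha\psi_m|\bx|+\frac{1}{p+\alpha}\!\left[\Omega_2^\top\Omega_1\right]+\epsilon_t.
\]

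The main obstacle, and the step I would verify most carefully, is to evaluate the remaining filter term. My claim is that, up to decaying terms,
\[
\frac{1}{p+\alpha}\!\left[\Omega_2^\top\Omega_1\right]=\alpha\psi_m\overline{|\bx|}-\alpha\bar\bx^\top\bar\bx_d+\epsilon_t.
\]
I would prove this by differentiating the candidate right-hand side $z:=\alpha\psi_m\overline{|\bx|}-\alpha\bar\bx^\top\bar\bx_d$, using $\dot{\bar\bx}=\Omega_1$, $\dot{\bar\bx}_d=\Omega_2$ and $\dot{\overline{|\bx|}}=\alpha(|\bx|-\overline{|\bx|})$ modulo decay, and checking that $\dot z+\alpha z=\Omega_2^\top\Omega_1$; the identity $\bx_d^\top\bx=\psi_m|\bx|$ is precisely what makes the $\alpha^2\psi_m|\bx|$ contribution match on both sides, while the cross terms arrange themselves to cancel.

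Substituting this evaluation back collapses the previous display to $y-\Phi^\top\bx=-\alpha\psi_m(|\bx|-\overline{|\bx|})+\epsilon_t$. To identify this with $d$, I would use $\bc(\theta)=\bx/|\bx|$ and $L_0\bi^\top\bc(\theta)=|\bx|-\psi_m$, so that $L_0\bi^\top\bx/|\bx|=|\bx|-\psi_m$; applying $\alpha p/(p+\alpha)$ (which annihilates the constant $\psi_m$) and multiplying by $-\psi_m$ yields $-\psi_m\alpha(|\bx|-\overline{|\bx|})=-\ell(\alpha p/(p+\alpha))[\bi^\top\bx/|\bx|]=d$ with $\ell=\psi_m L_0$. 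This completes the regression $y=\Phi^\top\bx+d+\epsilon_t$.
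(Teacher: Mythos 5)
Your proof is correct: I checked the representations $\Omega_1=\alpha(\bx-\bar\bx)$, $\Omega_2=\alpha(\bx_d-\bar\bx_d)$ with $\bx_d:=\lambda-L_d\bi$, the identity $\bx_d^\top\bx=\psi_m|\bx|$, the expansion of $y-\Phi^\top\bx$, the verification $\dot z+\alpha z=\Omega_2^\top\Omega_1$ for $z=\alpha\psi_m\overline{|\bx|}-\alpha\bar\bx^\top\bar\bx_d$, and the final identification with $d$, and they all go through under Assumptions \ref{ass1} and \ref{ass2}. Your route is genuinely different in mechanics from the paper's, though built on the same skeleton (filter the saliency constraint of the active flux). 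The paper pivots on the scalar identity $L_0\bi^\top\bx=|\bx|^2-\psi_m^2-\psi_m L_0 i_d$, applies $\alpha p/(p+\alpha)$ to it, and then unravels the filtered products $\bi^\top\bx$ and $|\bx|^2$ by invoking, as a black box, the swapping-lemma variant of \cite[Lemma 2]{CHOetaltie}; the definitions \eqref{yphi} then emerge constructively from rearranging terms. You instead use the equivalent but tidier form $\bx_d^\top\bx=\psi_m|\bx|$ (note $\bx_d^\top\bx=|\bx|^2-L_0\bi^\top\bx$ and $\psi_m|\bx|=\psi_m L_0 i_d+\psi_m^2$, so the two identities carry identical content), and you replace the swapping lemma by a self-contained first-order ODE check of the exact instance needed. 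What the paper's approach buys is a derivation rather than a verification---no candidate $z$ pulled out of a hat---plus reuse of a general-purpose tool; what yours buys is self-containedness, an elementary argument, and a structural reading of the regressor that the paper leaves implicit: $\Omega_1$ and $\Omega_2$ are the filtered derivatives of the two fluxes $\lambda-L_q\bi$ and $\lambda-L_d\bi$, which explains why $\Phi=\Omega_1+\Omega_2$ appears. Two minor bookkeeping points you should make explicit: the positivity $L_0\bi^\top\bc(\theta)+\psi_m>0$ needed for $|\bx|=L_0\bi^\top\bc(\theta)+\psi_m$ follows from $|L_0 i_d|\le|L_0\bi|<\psi_m$ (Assumption \ref{ass1}), with $|\bx|\ge x_{\min}$ from Assumption \ref{ass2} licensing the divisions; and when you substitute $\dot{\bar\bx}=\Omega_1$, $\dot{\bar\bx}_d=\Omega_2$ ``modulo decay,'' the resulting mismatch in $\dot z+\alpha z$ is a decaying term multiplied by signals that are bounded precisely by Assumption \ref{ass2}, which is what allows it to be absorbed into $\epsilon_t$ after passing through ${1\over p+\alpha}$.
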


\begin{proof}
\rm
From \eqref{xsqu}, and after some lengthy but straightforward calculations, it is possible to prove that
$$
L_0 \bi^\top \bx = |\bx|^2 - \psi_m^2 - \psi_m L_0 i_d,
$$
with the current in the synchronous $dq$-frame defined as
$$
\begmat{i_d \\ i_q} =  \begmat{\cos(\theta) & \sin(\theta) \\ -\sin(\theta) & \cos(\theta)}\bi.
$$
Applying the filter ${\alpha p \over p+\alpha}$, we get
\begequ
\lab{l0alp}
L_0 {\alpha p \over p +\alpha}[\bi^\top \bx] = {\alpha p \over p +\alpha}[|\bx|^2] + d .
\endequ
Now, in \cite[Lemma 2]{CHOetaltie} the following variation of the well-known Swapping Lemma  \cite[Lemma 3.6.5]{SASBOD} is established. Given two smooth functions $u,v$ and a constant $\alpha >0$, we have that
$$
\begin{aligned}
{\alpha p \over p+\alpha}[uv] = & {\alpha p \over p +\alpha}[u]v + {\alpha \over p+\alpha}[u]{\alpha p \over p+\alpha}[v] \\
& - {1\over p+ \alpha}\Big[  {\alpha p \over p+\alpha}[u] {\alpha p \over p+\alpha}[v]
\Big].
\end{aligned}
$$
The proof of Lemma \ref{lem1} is completed applying the identity above to \eqref{l0alp} and rearranging terms.
\qed
\end{proof}

It is important to underline that the tuning parameter $\alpha$ determines the bandwidth  of the filters used for the observer design, with a larger value corresponding to faster transient responses.
%
\section{Main Result}
\lab{sec3}
%
In this section we present our flux/position observer that, besides Assumptions \ref{ass1} and \ref{ass2}, requires the following standard PE assumption \cite[Subsection 2.5]{SASBOD}.

\begin{assumption} \em
\lab{ass3}
$\Phi$ is PE. That is, there exist $\delta >0$ and $T>0$ such that
$$
\int_t^{t+T} \Phi(s) \Phi^\top(s) ds \ge \delta I_2, \quad \forall t\ge 0.
$$
\end{assumption}
\begin{proposition}
\rm\label{pro1}
Consider the electrical dynamics of the IPMSM \eqref{ipmsm1}, \eqref{ipmsm2} verifying Assumptions \ref{ass1} and \ref{ass2}, with the signal $\Phi$, defined in \eqref{yphi}, fulfilling the PE Assumption \ref{ass3}. Define the active flux/position observer
\begequ
\label{observer}
\begin{aligned}
\dot{\hat{\bf \lambda}} & =\bv - R\bi  + \gamma \Phi \Bigg(y - \Phi^\top \hat\bx + \ell {\alpha p \over p+ \alpha} \big[\bi^\top {\sigma(\hat\bx)} \big]\Bigg) \\
\hat\bx & = \hat \lambda - L_q \bi \\
\\
\hat \theta & =  { {\rm atan2} (\hat{\bx}_2, \hat{\bx}_1), }
\end{aligned}
\endequ
with $y$ defined in \eqref{yphi}, the mapping
$$
\sigma(\hat\bx) = \left\{
\begin{aligned}
& ~~{\hat\bx \over |\hat\bx|}  & \qquad \text{if~~} |\hat\bx| \ge \epsilon>0 \\
& ~~\col(0,0) & \qquad \text{otherwise,}
\end{aligned}
\right.
$$
and $\epsilon\in(0,x_{\min})$ and $\gamma>0$ {\em tuning} parameters. There exist $\alpha_{\max}>0$ and $\gamma_{\max}>0$ such that for all $\alpha \leq \alpha_{\max}$ and  $\gamma \leq \gamma_{\max}$ we have
$$
|\tilde\lambda(t)| \leq m_0e^{-\rho_0 t}|\tilde\lambda(0)|, \quad \forall t \geq 0,
$$
for some ${m_0>0,\rho_0>0}$. Moreover, $\lim_{t\to\infty} |\tilde \theta(t)|=0$ exponentially fast.
\end{proposition}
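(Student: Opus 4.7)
The plan is to reduce the problem to a perturbed linear time-varying system $\dot{\tilde\lambda}=-\gamma\Phi\Phi^\top\tilde\lambda+\gamma\Phi\eta$, where the perturbation $\eta$ is a Lipschitz function of $\tilde\lambda$ whose gain can be made arbitrarily small by choosing $\alpha$ small, and then invoke the standard robustness of PE-driven gradient observers.

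First, I would subtract the flux equation $\dot\lambda=v-Ri$ from the observer in \eqref{observer} and note, using $\hat\bx=\hat\lambda-L_q\bi$, that $\tilde\bx=\tilde\lambda$. Substituting the regression \eqref{flux_dyn} into the innovation yields
\begin{equation*}
y-\Phi^\top\hat\bx+\ell{\alpha p\over p+\alpha}\bigl[\bi^\top\sigma(\hat\bx)\bigr]=-\Phi^\top\tilde\lambda+\eta,\quad \eta:=\ell{\alpha p\over p+\alpha}\Bigl[\bi^\top\bigl(\sigma(\hat\bx)-\tfrac{\bx}{|\bx|}\bigr)\Bigr],
\end{equation*}
so the error dynamics are $\dot{\tilde\lambda}=-\gamma\Phi\Phi^\top\tilde\lambda+\gamma\Phi\eta$. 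The $\epsilon_t$-tail from the filters is exponentially decaying and can be absorbed at the end.

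Next, I would establish a Lipschitz-type bound $|\sigma(\hat\bx)-\bx/|\bx||\le C|\tilde\lambda|$ with $C=C(x_{\min},\epsilon)$. This is routine on the region $|\hat\bx|\ge\epsilon$ because the map $z\mapsto z/|z|$ is Lipschitz away from zero; on the region $|\hat\bx|<\epsilon$, $\sigma(\hat\bx)=0$ and the inequality $|\tilde\lambda|\ge x_{\min}-\epsilon>0$ (Assumption~\ref{ass2}) guarantees that the constant $1$ is still dominated by $C|\tilde\lambda|$. Combined with boundedness of $\bi$ (Assumption~\ref{ass2}), the filter input is bounded by $C'|\tilde\lambda|$. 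Writing ${\alpha p\over p+\alpha}=\alpha I-{\alpha^2\over p+\alpha}$, the filter output $\eta$ is realized by an LTI system whose state $\xi$ satisfies $\dot\xi=-\alpha\xi+\alpha u$ with $|u|\le C'|\tilde\lambda|$, and $\eta=\ell\alpha(u-\xi)$. Thus $|\eta(t)|\le\alpha\,\kappa\bigl(|\tilde\lambda(t)|+\sup_{s\le t}e^{-\alpha(t-s)}|\tilde\lambda(s)|\bigr)$, making $\eta$ small in an $\alpha$-scaled sense.

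Now I would invoke the classical result (e.g., Sastry--Bodson) that under Assumption~\ref{ass3} and boundedness of $\Phi$, the unperturbed system $\dot z=-\gamma\Phi\Phi^\top z$ is GES and admits a quadratic Lyapunov function $V(z,t)=z^\top P(t)z$ with $c_1I\le P(t)\le c_2 I$ and $\dot V\le -c_3\gamma V$ along trajectories (the rate scaling with $\gamma$ and the PE parameters $\delta,T$). Computing $\dot V$ along the perturbed dynamics produces a cross term $2\gamma z^\top P\Phi\eta$ which, using the bound on $\eta$ and a Young inequality, is dominated by $\alpha$ times a multiple of $V$ plus a filter-state contribution. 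Augmenting the Lyapunov function with an $\alpha$-weighted quadratic in the filter state $\xi$ and choosing $\alpha\le\alpha_{\max}$ and $\gamma\le\gamma_{\max}$ sufficiently small guarantees $\dot W\le -\rho W$, which yields the claimed exponential bound $|\tilde\lambda(t)|\le m_0 e^{-\rho_0 t}|\tilde\lambda(0)|$. The angle estimate follows because ${\rm atan2}(\hat\bx_2,\hat\bx_1)$ is smooth on $\{|\bx|\ge x_{\min}\}$, so exponential convergence of $\tilde\bx=\tilde\lambda$ transfers to exponential convergence of $\tilde\theta$ (modulo $2\pi$).

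The main technical obstacle I anticipate is the interplay between the filter dynamics inside $\eta$ and the PE-based Lyapunov estimate: because $\eta$ depends on the \emph{past} of $\tilde\lambda$ through a low-pass filter, a naive pointwise bound is insufficient, and one has to carry the filter state explicitly in the Lyapunov function (or equivalently invoke an ISS-style small-gain argument) to close the loop. The remaining care is the constant $C$ in the Lipschitz bound, which blows up as $\epsilon\to 0$ or $\epsilon\to x_{\min}$; choosing $\epsilon\in(0,x_{\min})$ strictly interior keeps everything finite.
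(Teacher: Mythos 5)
Your proposal follows essentially the same route as the paper's proof: reduce to the perturbed gradient error system $\dot{\tilde\bx}=-\gamma\Phi(\Phi^\top\tilde\bx+\tilde d)$, factor the filtered perturbation through $\tilde\bx$ using $|\bx|\ge x_{\min}$ (your two-region bound, including the observation that $|\tilde\bx|\ge x_{\min}-\epsilon$ when $|\hat\bx|<\epsilon$, is exactly the paper's Case 1/Case 2 computation of the mapping $\bw$), realize the filter by an explicit state, and close a composite Lyapunov argument exploiting PE and smallness of $\alpha$ and $\gamma$. The only structural difference is bookkeeping: the paper absorbs the filter state into an augmented nominal system $\dot\chi=\gamma A_0(t)\chi$ with block-triangular $A_0$ (setting $\alpha=c_0\gamma$), whereas you keep $\alpha$ and $\gamma$ decoupled and weight the filter state separately---which if anything matches the statement's independent thresholds $\alpha_{\max},\gamma_{\max}$ more directly. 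You also correctly identified the main obstacle, namely that the filter memory forces the filter state into the Lyapunov function.

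Two points in your sketch are asserted where the paper does real work. First, your step ``$c_1 I\le P(t)\le c_2 I$ and $\dot V\le -c_3\gamma V$'' silently requires $c_1,c_2$ to be \emph{independent of} $\gamma$ while the rate scales linearly in $\gamma$; a generic converse-Lyapunov theorem applied to $\dot z=-\gamma\Phi\Phi^\top z$ gives constants that may degrade as $\gamma\to 0$, and if they did, your final domination of the $O(\gamma\alpha)$ cross term by the $O(\gamma)$ nominal decay would not close. This uniformity is true but is precisely the crux: the paper obtains it from Anderson et al.'s expansion $\rho=\rho_1\gamma+\mathcal{O}(\gamma^2)$, $m=m_1+\mathcal{O}(\gamma)$, followed by the time-scale change $\tau=\gamma t$ and Khalil's converse theorem in the $\tau$ scale, so that all Lyapunov constants are $\gamma$-free by construction. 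Citing ``classical Sastry--Bodson'' does not by itself deliver this. Second, $\sigma$ is \emph{discontinuous} across $|\hat\bx|=\epsilon$ (a unit-norm jump), not merely non-Lipschitz near zero, so solutions of \eqref{observer} must be understood in Filippov's sense and the Lyapunov decrease holds only almost everywhere via the nonsmooth chain rule (Paden--Sastry), which the paper invokes and your sketch omits. Neither point invalidates your approach---your pointwise bound on $\sigma(\hat\bx)-\bx/|\bx|$ is valid on both regions---but both must be supplied for the argument to be a proof rather than an outline.
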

\noindent {\bf Caveat} The projection is a technical modification needed to avoid a potential division by zero in the computation of $\dot{\hat{\bf \lambda}}$, which allows us to give an analytic expression for the term  $\bw(\bi, \bx,\tilde\bx)$ and to rigorously prove its boundedness. It should be pointed out that, neither in the simulations nor in the experiments of Section \ref{sec5}, it was necessary to include the projection operator.
\begin{proof}
\rm
Due to the discontinuity in the first equation of \eqref{observer}, the subsequent analysis is based on the Filippov's solution concept \cite{FIL}. In the following analysis, we simply select $\epsilon =\hal x_{\min}$. Defining the estimate of $d$ as
$$
  \Hat{d} := -\ell {\alpha p \over p+ \alpha} \big[\bi^\top {\hat\bx \over |\hat\bx|}\big],
$$
and denoting  the estimation error as $\tilde{d} := \hat{d} -d$, we get the observation error dynamics as
\begali{
\dot{\tilde\bx}& =  - \gamma \Phi (\Phi^\top \tilde\bx + \tilde{d}).
\lab{dottilx}
}
Now, notice that
\begali{
\nonumber
\tilde d &= -\ell {\alpha p \over p+ \alpha} \Big[ \bi^\top \Big( {\hat\bx \over |\hat\bx|} - {\bx \over |\bx|}\Big) \Big]\\
\lab{tild}
& = - {\alpha p \over p+ \alpha} [\bw^\top (\bi,\bx,\tilde\bx) \tilde\bx],
}
where we have moved the constant $\ell$ inside the filter and defined the continuous mapping $\bw:\rea^2 \times \rea^2 \times \rea^2 \to \rea^{2}$. The existence of this factorization is ensured invoking the Lagrange reminder representation of the Taylor series expansion and noticing that the term in parenthesis in the first equation of \eqref{tild} is zero at $\tilde \bx=0$. We may get the expression of $\bw$ as follows. 

{\em Case 1.} Indeed, if $|\hat \bx| \ge \hal x_{\min}$, we have
$$
\begin{aligned}
{1\over \ell}\bw^\top(\bi,\bx,\tilde{\bx})\tilde \bx
 := & \bi^\top \Big( {\hat\bx \over |\hat\bx|} - {\bx \over |\bx|}\Big) \\
 = & \bi^\top \Bigg[{1\over |\hat\bx|}\tilde\bx
 + {|\bx| - |\hat\bx| \over |\hat\bx||\bx|} \bx 
 \Bigg] \\
  = & \bi^\top \Bigg[{1\over |\hat\bx|}\tilde\bx
 + {|\bx|^2 - |\hat\bx|^2 \over |\hat\bx||\bx|(|\bx|+|\hat\bx|)} \bx 
 \Bigg] \\
  = & \bi^\top \Bigg[{1\over |\hat\bx|}\tilde\bx
 - {\tilde{\bx}^\top (\bx + \hat\bx) \over |\hat\bx||\bx|(|\bx|+|\hat\bx|)} \bx 
 \Bigg] \\
 = & {1\over |\hat\bx|} \tilde{\bx}^\top \bi
 - \tilde{\bx}^\top {(\bx + \hat\bx)\over |\hat\bx||\bx|(|\bx|+|\hat\bx|)} \bi^\top \bx
 \\
 = & \tilde{\bx}^\top \Bigg[
 {\bi \over |\hat\bx| }-  {(\bx + \hat\bx)\over |\hat\bx||\bx|(|\bx|+|\hat\bx|)} \bi^\top \bx
 \Bigg]\\
 = & \tilde{\bx}^\top {1\over |\hat \bx|} \Bigg[
 I_{2}
 -    {(\bx + \hat\bx) \bx^\top\over |\bx|(|\bx|+|\hat\bx|)}
 \Bigg] \bi.
\end{aligned}
$$
Therefore, we have
$$
\bw(\bi,\bx,\tilde\bx)
=
{\ell \over |\hat \bx|} \Bigg[ I_{2}
 -    { (\bx + \hat\bx) \bx^\top \over |\bx|(|\bx|+|\hat\bx|)} \Bigg] \bi,
$$
with $\hat\bx = \tilde\bx + \bx$. It is clear that 
\begequ
\label{ineq:w}
|\bw(\bi,\bx,\tilde{\bx})| \le {\ell \over |\hat\bx|} \|\bi\|_\infty \le {4\ell\over x_{\min}} \|\bi\|_\infty,
\endequ
invoking the fact
$
\left|{(\bx + \hat\bx) \bx^\top\over |\bx|(|\bx|+|\hat\bx|)}\right| \le 1.
$

{\em Case 2.} For the other case of $|\hat{\bx}| < \hal x_{\min} $, we have
$$
{1\over \ell}\bw^\top (\bi, \bx,\tilde\bx) \tilde\bx = - \bi^\top {\bx\over|\bx|}.
$$
It yields
$$
\bw(\bi, \bx,\tilde\bx) = - \ell{\bi^\top \bx \over |\bx| |\tilde\bx|^2} \tilde\bx,
$$
which is well-defined since $|\bx| \ge x_{\min}$ and 
$
|\tilde\bx| \ge |\bx| - |\hat \bx| \ge  x_{\min} - \hal x_{\min} >0.
$
For this case, we have
\begin{equation}
\label{ineq:w2}
 |\bw(\bi,\bx,\tilde\bx)| \le {2\ell \over x_{\min}} \|\bi\|_\infty.
\end{equation}
Together with \eqref{ineq:w}, we get the lower bound of $|\bw|$ for all $\hat\bx \in \rea^2$.

The equation \eqref{tild} admits a state-space realization
\begali{
\nonumber
\dot{z} & = -\alpha z + \alpha^2 \bw^\top \tilde{\bx} \\
\lab{starea}
\tilde d & = z - \alpha \bw^\top \tilde{\bx}.
}
Choosing, without loss of generality, $\alpha=c_0 \gamma$ for some $c_0>0$, replacing \eqref{starea} in \eqref{dottilx} we can write the error equation as
\begequ
\label{dyn1}
\dot \chi = \gamma {A}_0(t) \chi + \gamma^2  \begmat{ -c_0 \Phi \bw^\top \\ \\  c^2_0 \bw^\top  }\tilde{\bx},
\endequ
where we defined
$$
\chi:=\begmat{\tilde{\bx} \\ \\ z},\;A_0(t):= \begmat{- \Phi(t) \Phi^\top(t) & -\Phi(t) \\ \\ 0 & -c_0}.
$$
It is well-known \cite{ANDetal,SASBOD} that, under Assumptions \ref{ass2} and \ref{ass3}, the unperturbed part of \eqref{dyn1}, namely, the linear time-varying system $\dot \chi = \gamma A_0(t) \chi $, is GES. Equivalently,
\begin{equation}
\label{convg_t}
 |\chi(t)| \le m e^{-\rho t}|\chi(0)|,
\end{equation}
for some positive constants $m$ and $\rho$. Moreover, it is shown in \cite[Corollary 2.1, pp. 49]{ANDetal} that there exists $\gamma_{\max}>0$ such that for all $\gamma \in (0,\gamma_{\max})$, there exists positive constants $m_1$ and $\rho_1$---{\em determined} by $T$ and $\delta$ in the PE condition of Assumption \ref{ass3}---such that
\begalis{
\rho & = \rho_1 \gamma + \mathcal{O}(\gamma^2)\\
 m & = m_1 + \mathcal{O}(\gamma)
}
where $\mathcal{O}(\varepsilon)$ is the uniform ``big $\mathcal{O}$" symbol.\footnote{That is $f(z,\epsilon)=\mathcal{O}(\epsilon)\;\Leftrightarrow\;|f(z,\epsilon)|\leq C \epsilon$  with $C$ independent of $z$ and $\epsilon$.} This means, that there exists positive constants $\rho_2,~ m_2$---independent of $\gamma$---guaranteeing
\begalis{
\rho_1\gamma - \rho_2 \gamma^2 & \le \rho \le \rho_1 \gamma + \rho_2 \gamma^2\\
m &\le m_1 + m_2\gamma.
}
Let us introduce now the {\em time scale change} ${d\tau \over dt}=\gamma$ and write the error equation \eqref{dyn1} in the $\tau$ time scale as
\begequ
\label{dyntau}
{
\chi' = {A}(\tau) \chi + \gamma  \begmat{ -c_0 \Phi \bw^\top \\ \\  c^2_0 \bw^\top  }\tilde{\bx},}
\endequ
where we defined
$A(\tau) := A_0({\tau \over \gamma})$ and, to avoid cluttering, we introduce the notation $(\cdot)':={d \over d\tau}(\cdot)$. Now we rewrite \eqref{convg_t} in the $\tau$-time scale and compute the bounds
\begali{
\nonumber
|\chi(\tau)| & \le  (m_1 + m_2 \gamma) e^{-(\rho_1 - \rho_2\gamma)\tau} |\chi(0)| \\
\nonumber
& \le (m_1 + m_2 \gamma_{\max}) e^{-(\rho_1-\rho_2\gamma_{\max}) \tau}|\chi(0)| \\
& =: m_* e^{ -\rho_* \tau} |\chi(0)|,
\label{convg:tau}
}
where defined
\begalis{
m_*&:= m_1 + m_2 \gamma_{\max}\\
\rho_*&:=\rho_1-\rho_2\gamma_{\max},
}
and we require that $\gamma_{\max} >0$ is small enough to ensure $\rho_*>0$. The key observation here is that $m_*$ and $\rho_*$ are \emph{independent} of $\gamma$.

We now proceed to articulate the Lyapunov stability argument for the system \eqref{dyntau} that allows us to complete the proof. Towards this end, we see that from Assumption \ref{ass2}, we have that $A(\tau)$ is bounded. Consequently, invoking \cite[Theorem 4.14]{KHA} and the inequality \eqref{convg:tau} we know that the system \eqref{dyntau} admits a Lyapunov function $V(\chi,\tau)$, verifying
\begalis{
c_1|\chi|^2 \leq V(\chi,\tau) & \leq c_2 |\chi|^2\\
\nabla_\tau V +(\nabla_\chi V)^\top A(\tau) \chi   & \leq - c_3 |\chi|^2 \\
|\nabla_\chi V| & \leq c_4 |\chi|,
}
where the constants positive $c_i,\;i=1,\dots,4$, are determined by $\rho_*$, $m_*$ and the bound on $\|A(\tau)\|_\infty$, with $\|\cdot\|_\infty$ the $\call_\infty$ norm---hence they are {\em independent} of $\gamma$. Evaluating the $\tau$ time-derivative of the Lyapunov function, along the trajectories of \eqref{dyntau} and using the bounds above, as well as the chain rule for nonsmooth systems \cite{PADSAS}, yields almost everywhere
\begalis{
 {V}' & \leq  -c_3 |\chi|^2 + \gamma  (\nabla_\chi V)^\top  \begmat{ -c_0 \Phi \bw^\top  \\ \\
  c_0^2 \bw^\top}\tilde{\bx}\\
  & \leq  -c_3 |\chi|^2 +\gamma c_4    \left\|\begmat{ - c_0\Phi \bw^\top  \\ \\   c^2_0  \bw^\top} \right\|_\infty  |\chi|^2.\\
}
Given Assumption \ref{ass2}, we can always find a small $\gamma_{\max}>0$ guaranteeing
$$
c_3 > \gamma_{\max} c_4    \left\|\begmat{ - c_0\Phi \bw^\top  \\ \\   c^2_0  \bw^\top} \right\|_\infty
$$
ensuring the GES of \eqref{dyntau} and completing the proof.\qed
\end{proof}

%
\section{The Persistence of Excitation Condition}
\lab{sec4}
%
In this section, we give some verifiable sufficient conditions that ensure the PE Assumption \ref{ass2} of $\Phi$.

\begin{proposition}
\label{pro2}\rm
Consider the dynamics \eqref{ipmsm1},  \eqref{ipmsm2}. The vector $\Phi$ is PE if any of the following conditions are satisfied:{
\begenu
  \item[{\bf C1}] the signal $(\bv - R\bi -  L_s p{\bi})$ is PE;
  \item[{\bf C2}] the position-parameterized port signal
  $$
  p [-{L_0 \over 2}\bi + (L_0 \bi^\top \bc(\theta) + \psi_m)\bc(\theta)]
  $$
  is PE;
  \item[{\bf C3}]  at standstill $(\theta = \text{const})$, the time derivative of $\bi$ is PE.
\endenu}
\end{proposition}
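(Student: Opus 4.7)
The plan is to reduce all three conditions to a single algebraic identity expressing $\Phi$ as a stably filtered version of the signals appearing in \textbf{C1} and \textbf{C2}, and then handle \textbf{C3} as a degenerate case of \textbf{C2} at standstill. The calculations are all explicit; the only non-calculational ingredient is a standard result about preservation of PE under stable LTI filtering.

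First, I would simplify the definition of $\Phi$ in \eqref{yphi}. Since $\Phi = \Omega_1+\Omega_2 = 2\Omega_1 - L_0 \tfrac{\alpha p}{p+\alpha}[\bi]$, pulling the common low-pass filter out gives
\begin{equation*}
\Phi = \frac{\alpha}{p+\alpha}\bigl[\,2(\bv-R\bi) - (2L_q+L_0)\,p\bi\,\bigr] = \frac{2\alpha}{p+\alpha}\bigl[\bv - R\bi - L_s\,p\bi\bigr],
\end{equation*}
where I have used $2L_q + L_0 = L_q + L_d = 2L_s$. At this point I would invoke the standard fact (e.g., Lemma 6.8 of \cite{SASBOD} or the analogous result in \cite{ANDetal}) that for a bounded signal with bounded derivative, PE is preserved by any proper stable LTI filter without transmission zeros on the imaginary axis. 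The filter $\frac{2\alpha}{p+\alpha}$ is stable with no finite zeros and nonzero DC gain, so \textbf{C1} immediately yields PE of $\Phi$.

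Next I would show that \textbf{C1} and \textbf{C2} describe the same signal. Faraday's law \eqref{ipmsm1} gives $p\lambda = \bv - R\bi$, and the active-flux definition $\bx = \lambda - L_q \bi$ gives $p\bx = \bv - R\bi - L_q\, p\bi$. Therefore
\begin{equation*}
\bv - R\bi - L_s\,p\bi = p\bx - (L_s - L_q)\,p\bi = p\Bigl[\bx - \tfrac{L_0}{2}\bi\Bigr].
\end{equation*}
Substituting the explicit form $\bx = (L_0\bi^\top \bc(\theta)+\psi_m)\bc(\theta)$ from \eqref{lamithe}--\eqref{x} shows this coincides with the \textbf{C2} signal, so \textbf{C2} $\Leftrightarrow$ \textbf{C1} and both imply PE of $\Phi$.

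Finally, for \textbf{C3}, at standstill $\bc(\theta)$ is a constant unit vector, the term $p[\psi_m \bc(\theta)]$ vanishes, and the \textbf{C2} signal collapses to
\begin{equation*}
L_0 \Bigl[\bc(\theta)\bc(\theta)^\top - \tfrac{1}{2}I_2\Bigr] p\bi .
\end{equation*}
The constant matrix in brackets has eigenvalues $+1/2$ (in the direction of $\bc$) and $-1/2$ (orthogonal), hence is nonsingular, and $L_0\neq 0$ by IPMSM assumption. Thus this signal is PE if and only if $p\bi$ is PE, which is exactly \textbf{C3}. Combining the three steps establishes all three implications. The only point requiring any care is the citation of the PE-preservation lemma; the rest is a sequence of clean identities driven by $2L_s = L_d+L_q$ and the definition of the active flux.
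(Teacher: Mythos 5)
Your proposal is correct and takes essentially the same route as the paper: the identity $\Phi = \tfrac{2\alpha}{p+\alpha}\left[\bv - R\bi - L_s\, p\bi\right]$ combined with the PE-preservation-under-stable-filtering lemma for \textbf{C1}, the rewriting of that signal as $p\bigl[-\tfrac{L_0}{2}\bi + (L_0\bi^\top\bc(\theta)+\psi_m)\bc(\theta)\bigr]$ via the motor dynamics for \textbf{C2}, and the factorization $L_0\bigl(\bc(\theta)\bc^\top(\theta)-\tfrac{1}{2}I_2\bigr)p[\bi]$ with a nonsingular constant matrix at standstill for \textbf{C3} (your eigenvalue computation $\pm\tfrac{1}{2}$ is equivalent to the paper's $\det = -\tfrac{1}{4}$). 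The only detail to fix is the citation: the filtering result is \cite[Lemma 2.6.7]{SASBOD}, as used in the paper, not Lemma 6.8.
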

\begin{proof}
\rm
It can be proved that
$$
\begin{aligned}
{1\over 2}\Phi & = {\alpha \over p + \alpha}[\bv - R\bi] - L_s {\alpha p \over p+\alpha}[\bi] \\
& = {\alpha \over p+\alpha }[\bv - R\bi -  L_s p{\bi}].
\end{aligned}
$$
It is well-known that a PE signal filtered by an asymptotically stable transfer function is still PE \cite[Lemma 2.6.7]{SASBOD}, thus
$$
(\bv - R\bi -  L_s p{\bi}) \in \text{PE} \quad \Longrightarrow \quad
\Phi \in \text{PE},
$$
verifying the claim \textbf{C1}. Now, from the motor dynamics we have
\begalis{
(\bv - R\bi -  L_s p{\bi}) &=p [{\lambda} -  L_s {\bi}] \in \text{PE} \\
&= p [(L_q -  L_s)\bi + (L_0 \bi^\top \bc(\theta) + \psi_m)\bc(\theta)]\\
&=p [-{L_0 \over 2} \bi + (L_0 \bi^\top \bc(\theta) + \psi_m)\bc(\theta)],
}
which proves the second claim. To prove the third claim we carry-out the following computations with $\theta$ {\em constant}:
\begalis{
 & p [-{L_0 \over 2} \bi + (L_0 \bi^\top \bc(\theta) + \psi_m)\bc(\theta)] \\
  = &  p [-{L_0 \over 2} \bi + (L_0 \bi^\top \bc(\theta))\bc(\theta)]  \\
   = & L_0 p[ - {1\over 2} \bi  +  \bc(\theta) \bc^\top(\theta) \bi ] \\
   = & L_0
\Big(  - {I_2\over 2}   + \bc(\theta)\bc^\top(\theta)\Big)  p[\bi].
}
Furthermore, we have
$$
\det \Big(  - {I_2\over 2}   + \bc(\theta)\bc^\top(\theta)\Big) = - {1\over 4},
$$
concluding the proof. \qed
\end{proof}

We bring to the readers attention the practical relevance of Condition \textbf{C3}. It is well-known that, because of a lack of observability at zero speed \cite{KOTetal}, for slow-speed operation of the motor it is necessary to use an active, probing-signal injection method to estimate the flux \cite{NAMbook}. Condition \textbf{C3} shows that injecting a high-frequency signal to the stator currents guarantees $\Phi\in $ PE at standstill. Simulation and experimental results, presented in the next section, corroborate this fact. See \cite{CHONAM,YIetal,YItcst} for further discussion on this matter.

%
\section{Simulation and Experimental Results}
\lab{sec5}
%
\subsection{Simulations}
\lab{subsec51}
Simulations were conducted to evaluate the performance of the proposed position observer with the motor parameters listed in Table \ref{tab:1}. The IPMSM is controlled by a classical decoupling field-oriented speed regulation scheme, and the observer is not connected to the closed-loop.

\begin{table}[h] \label{tab:1}
\caption{Parameters of the IPMSM: Simulation (the first column) and Experiments (the second column)}
\renewcommand\arraystretch{1.6}
\begin{center}
\vspace{0.2cm}
\begin{tabular}{l|r|r}
\hline\hline
 Number of pole pairs &  6  & 3  \\
 {PM flux linkage constant} ($\psi_m$) [Wb] &0.11 & 0.59\\
 $d$-axis inductance ($L_d$) [mH] & 5.74 & 7.6 \\
 $q$-axis inductance ($L_q$) [mH] & 8.68 & 12.9 \\
 Stator resistance ($R$) [$\Omega$]& 0.43 & 1.059 \\
 Drive inertia [kg$\cdot\text{m}^2$] & 0.01 & $\ge$ 0.01 \\
 Rated torque [N$\cdot$m] & 6.2 & 6.2
 \\
 \hline\hline
\end{tabular}
\end{center}
\end{table}
\subsubsection{Non-zero speed behaviour}
\lab{subsubsec611}
%
We first consider the motor speeding up from $10$ rad/s to $100$ rad/s with a constant torque equal to $0.5$ N$\cdot$m, see Fig. \ref{fig:1}. The parameters and initial conditions of the observer are selected as $\alpha=20$, $\gamma =10$ and $\hat{\lambda}(0)= [0.5, 2]$. The performance of the position estimator is shown in Fig. \ref{fig:2}, where perfect tracking after a short time period is observed. Fig. \ref{fig:3} gives the trajectory of the vector $\Phi$, clearly satisfying the PE condition. A load step disturbance is considered in Fig. \ref{fig:4} at $t=0.5$ s, illustrating how the observer is invariant to this perturbation.

To evaluate the conservativeness of our main proposition we tested the case with a large $\alpha =200$, whose simulation results are given in Fig. \ref{fig:5}. As expected, a small steady-state position estimation error is observed. This stems from the fact that the perturbation term in \eqref{dyn1} cannot be dominated by the GES part for large $\alpha$. To show that the same phenomenon appears increasing $\gamma$, in Fig. \ref{fig:6} we took $\gamma=100$, where a notable performance degradation is observed.

\begin{figure}
  \centering
  \includegraphics[width=0.35\textwidth]{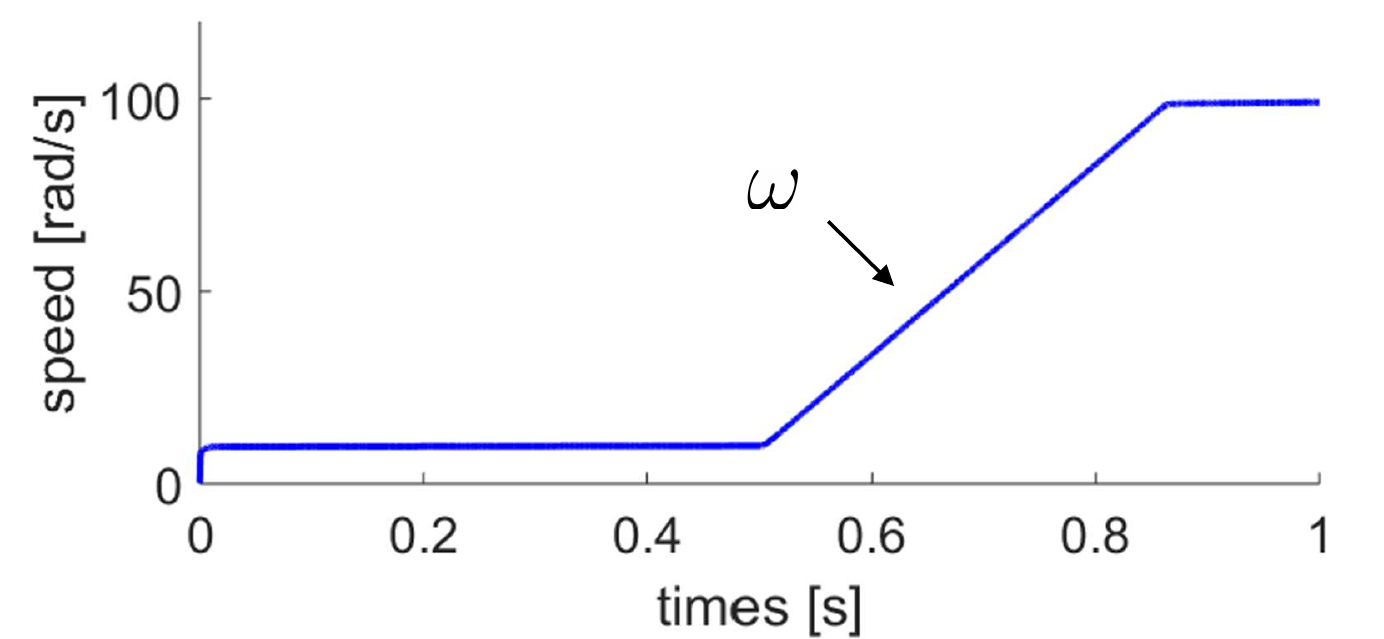}
  \caption{Reference for the rotor speed $\omega$.}\label{fig:1}
\end{figure}

\begin{figure}
  \centering
  \includegraphics[width=0.35\textwidth]{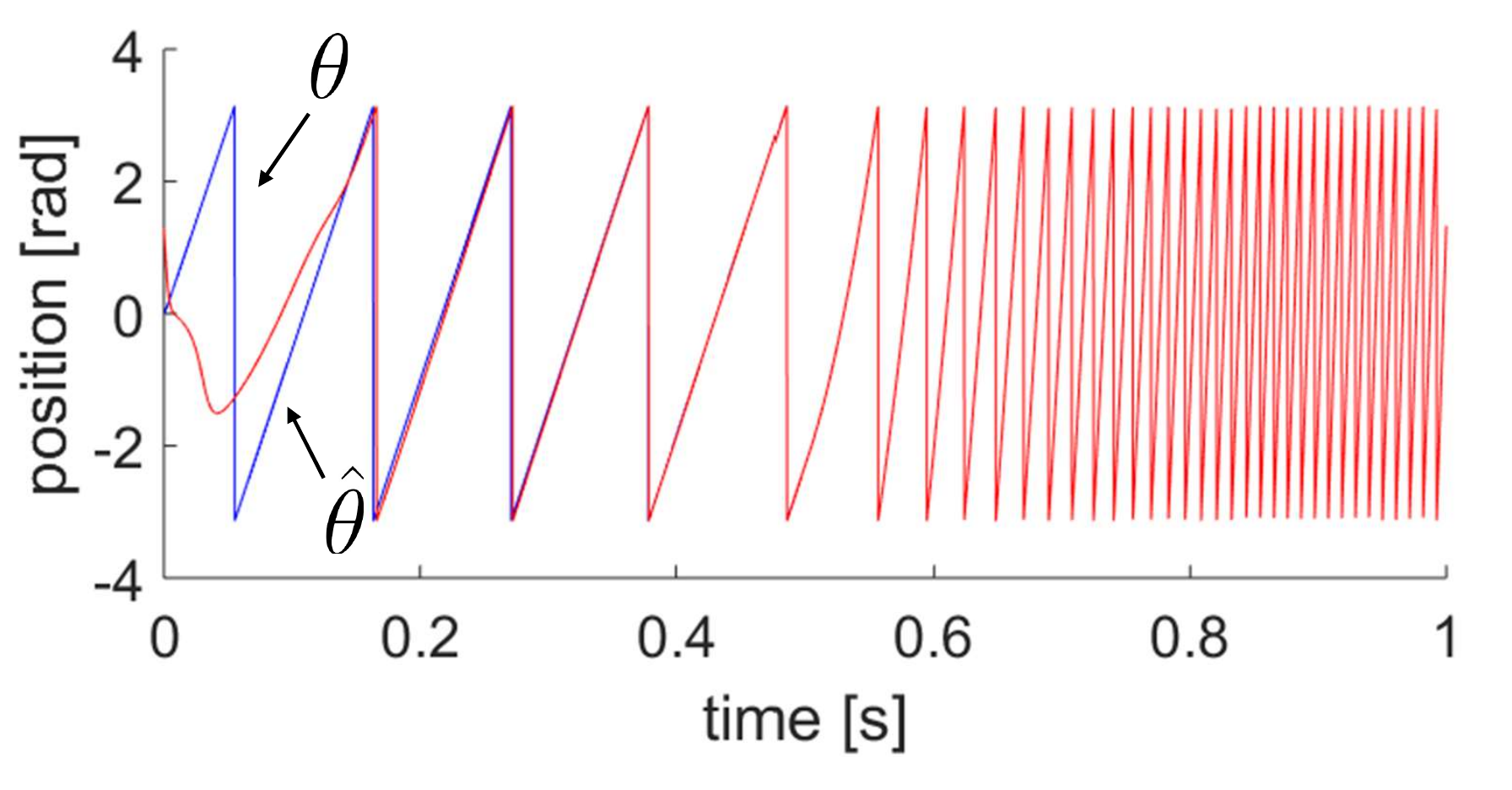}
  \caption{Angle $\theta$ and its estimate $\hat\theta$. ($\gamma=10,\; \alpha = 20$) }\label{fig:2}
\end{figure}

\begin{figure}
  \centering
  \includegraphics[width=3cm]{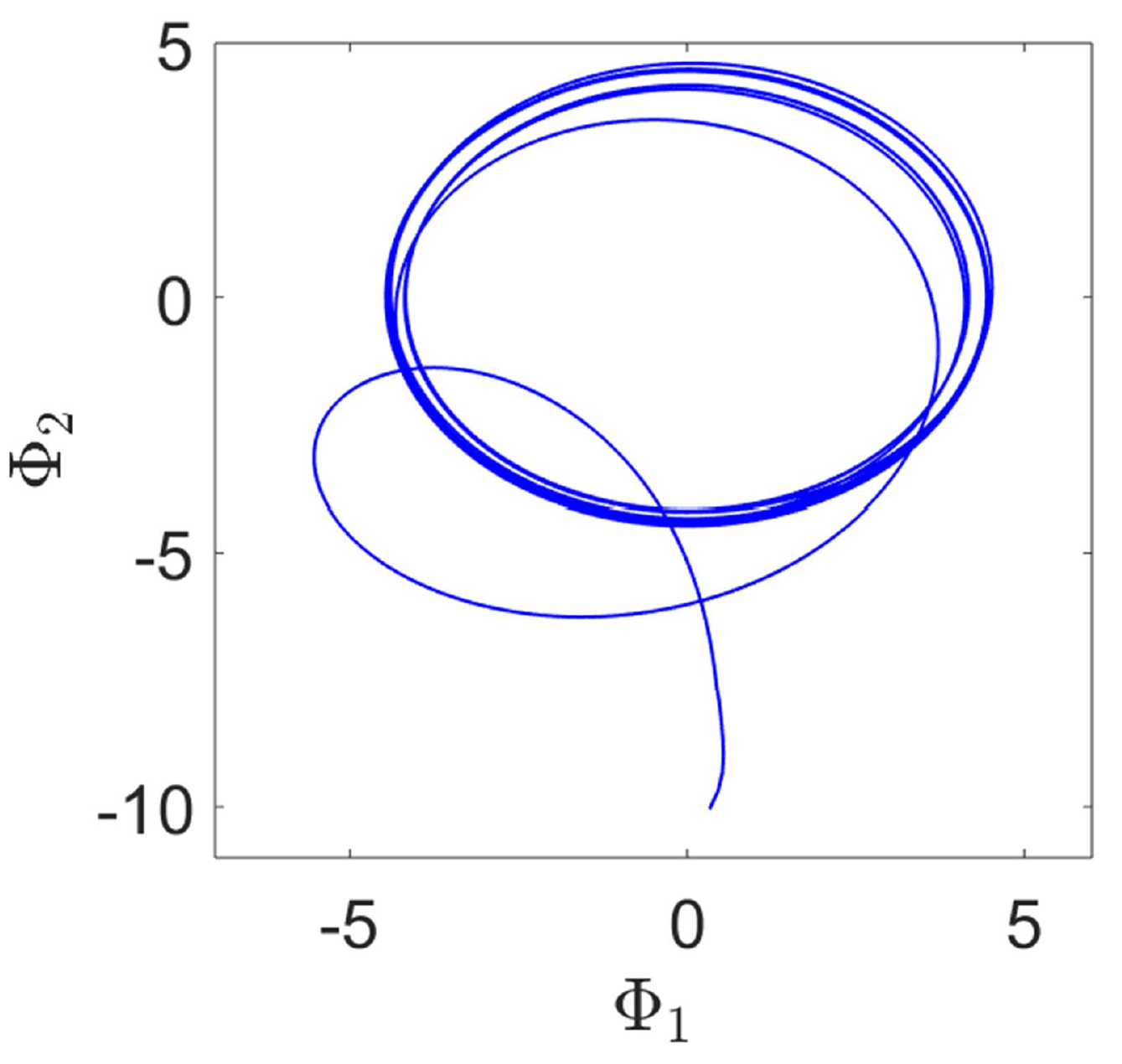}
  \caption{Trajectory of the vector $\Phi$.}\label{fig:3}
\end{figure}

\begin{figure}
  \centering
  \includegraphics[width=0.4\textwidth]{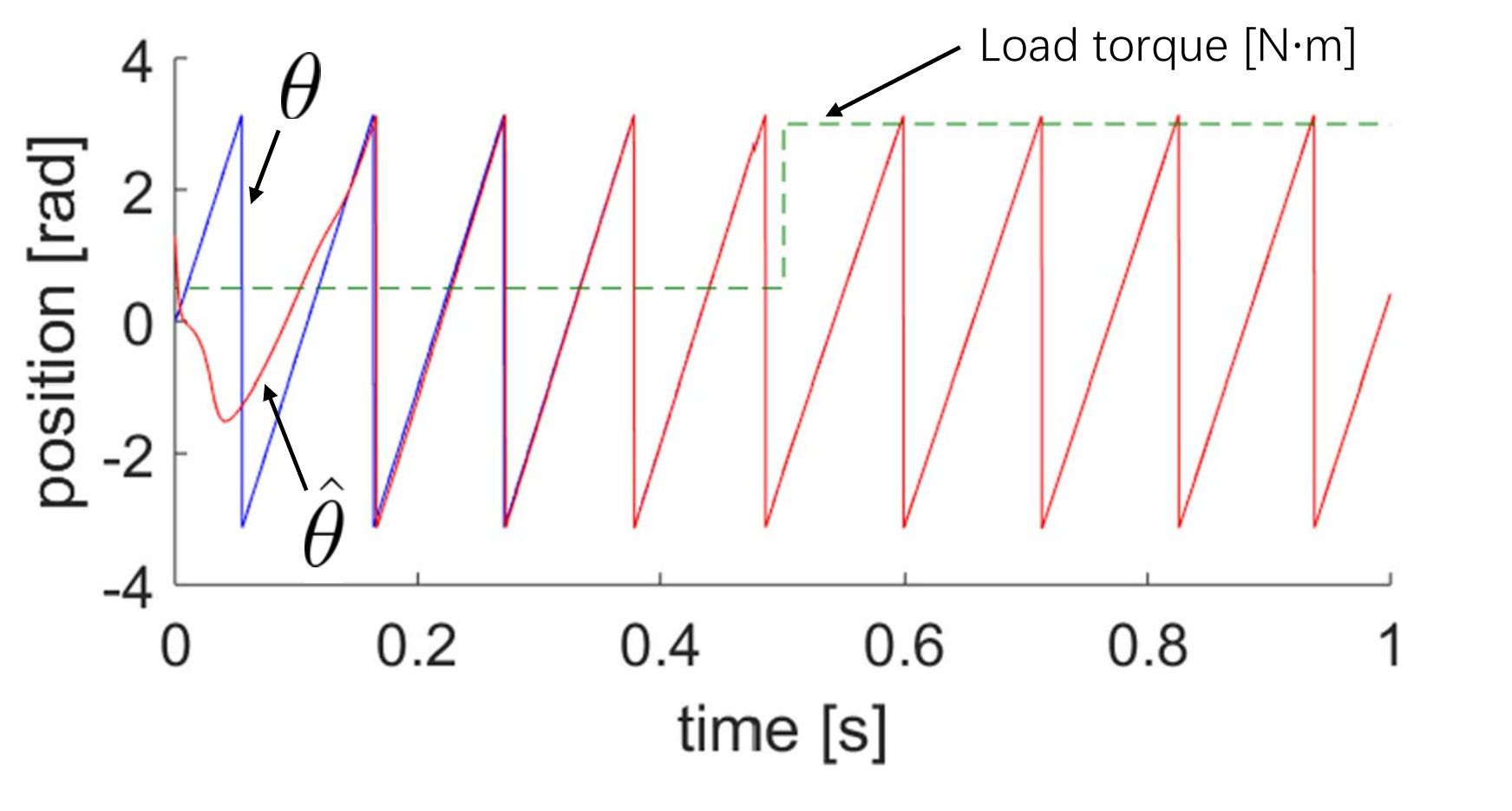}
  \caption{Angle $\theta$ and its estimate $\hat\theta$ with a load torque step.}\label{fig:4}
\end{figure}

\begin{figure}
  \centering
  \includegraphics[width=0.35\textwidth]{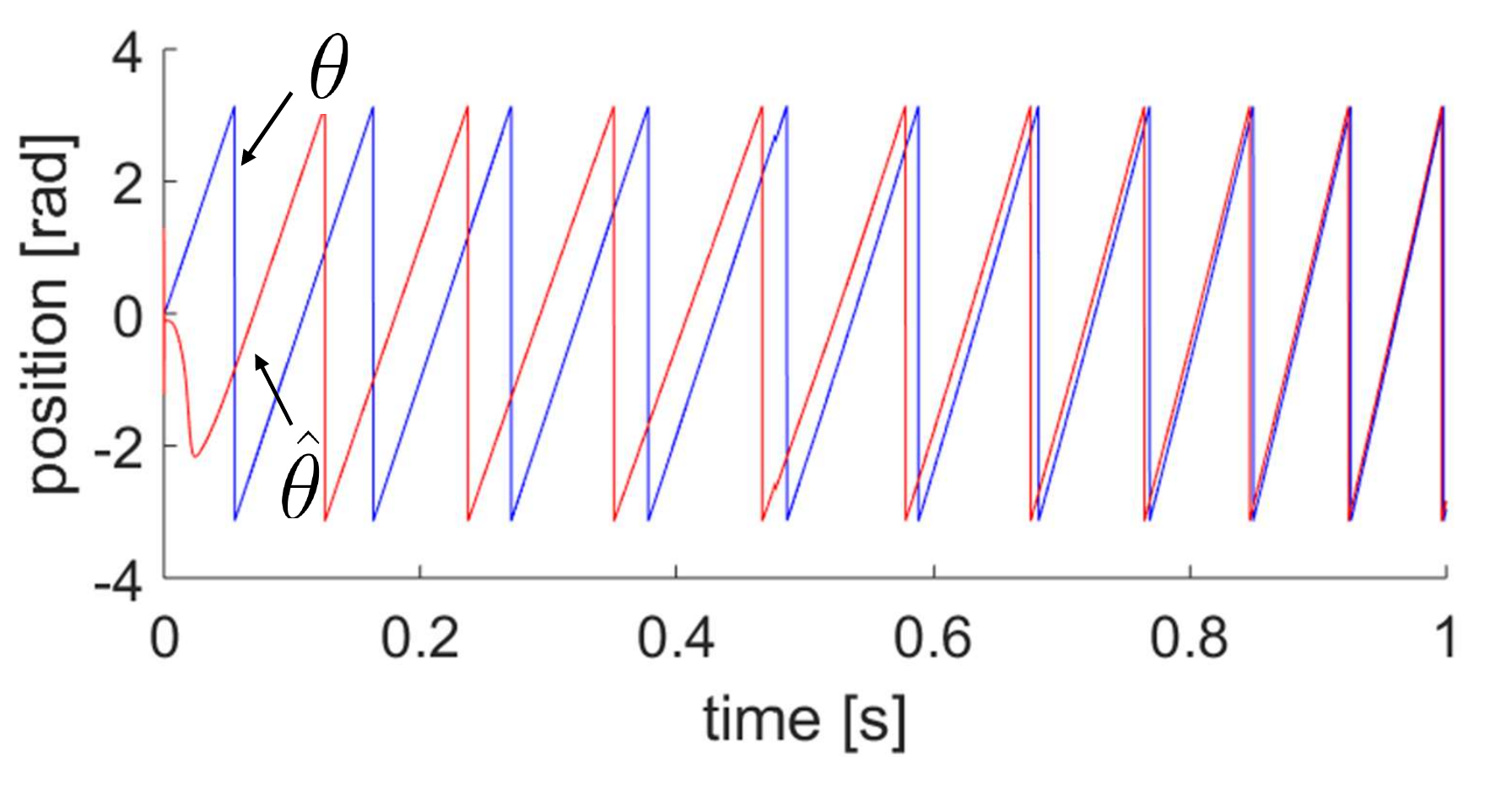}
  \caption{Angle $\theta$ and its estimate $\hat\theta$ with a large $\alpha$. ($\gamma=10,\; \alpha = 200$)}\label{fig:5}
\end{figure}

\begin{figure}
  \centering
  \includegraphics[width=0.35\textwidth]{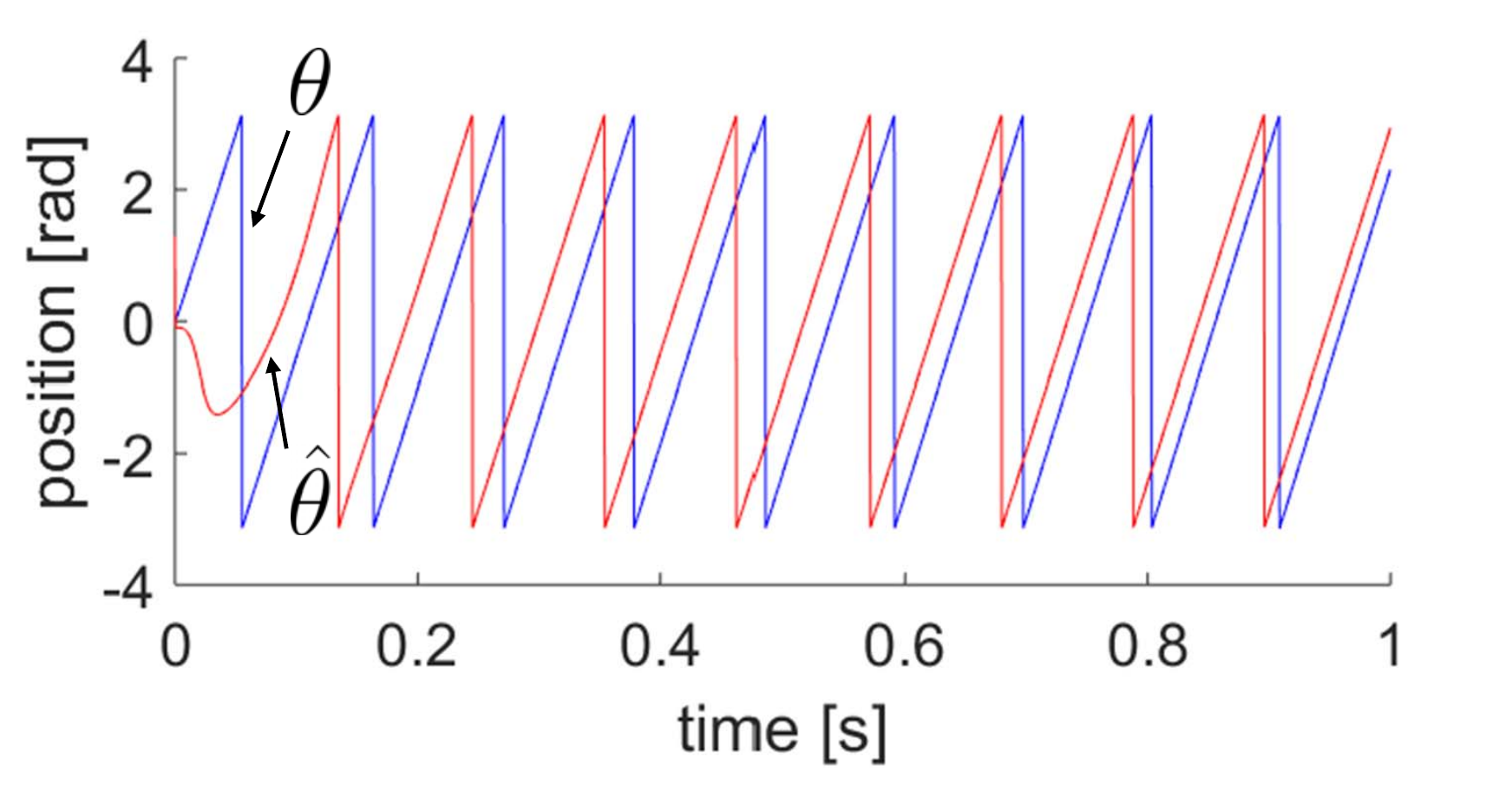}
  \caption{Angle $\theta$ and its estimate $\hat\theta$ with a large $\gamma$. ($\gamma=100,\; \alpha = 20$)}\label{fig:6}
\end{figure}
\subsubsection{Signal injection at zero speed}
\lab{subsubsec612}
%
In this subsection we simulate the motor operating at zero velocity and verify the PE condition injecting a probing signal. As indicated in condition {\bf C3} of Proposition \ref{pro2}, $\Phi$ is PE at standstill if $p[\bi]$ is PE. Fig.~\ref{fig:add} shows the behaviour of the proposed observer with or without  high-frequency injection when the motor is at a standstill.

The initial angle error was 90$^{\circ}$, that is, we used as an initial value
$
\hat{\bf \lambda} (0) = \psi_m \col(\cos(\theta + \pi/2), \sin(\theta + \pi/2)).
$
For $0\leq t\leq 0.05$ s, the estimation flux $\hat{\bx}$ stays in the initial value because $\boldsymbol{\Phi}$ is the zero vector. For $t \geq 0.05$~s, a rotating high-frequency voltage in the stationary frame is injected, as suggested in \cite{GARetal}. Then, the trajectory of $\bi$ and $\Phi$ form an ellipse as shown in Fig.~\ref{fig:add} (e) and (f), showing that $\Phi$ satisfies the PE condition, even at standstill.

\begin{figure}[tb]
  \centering
  \includegraphics[width=0.5\textwidth]{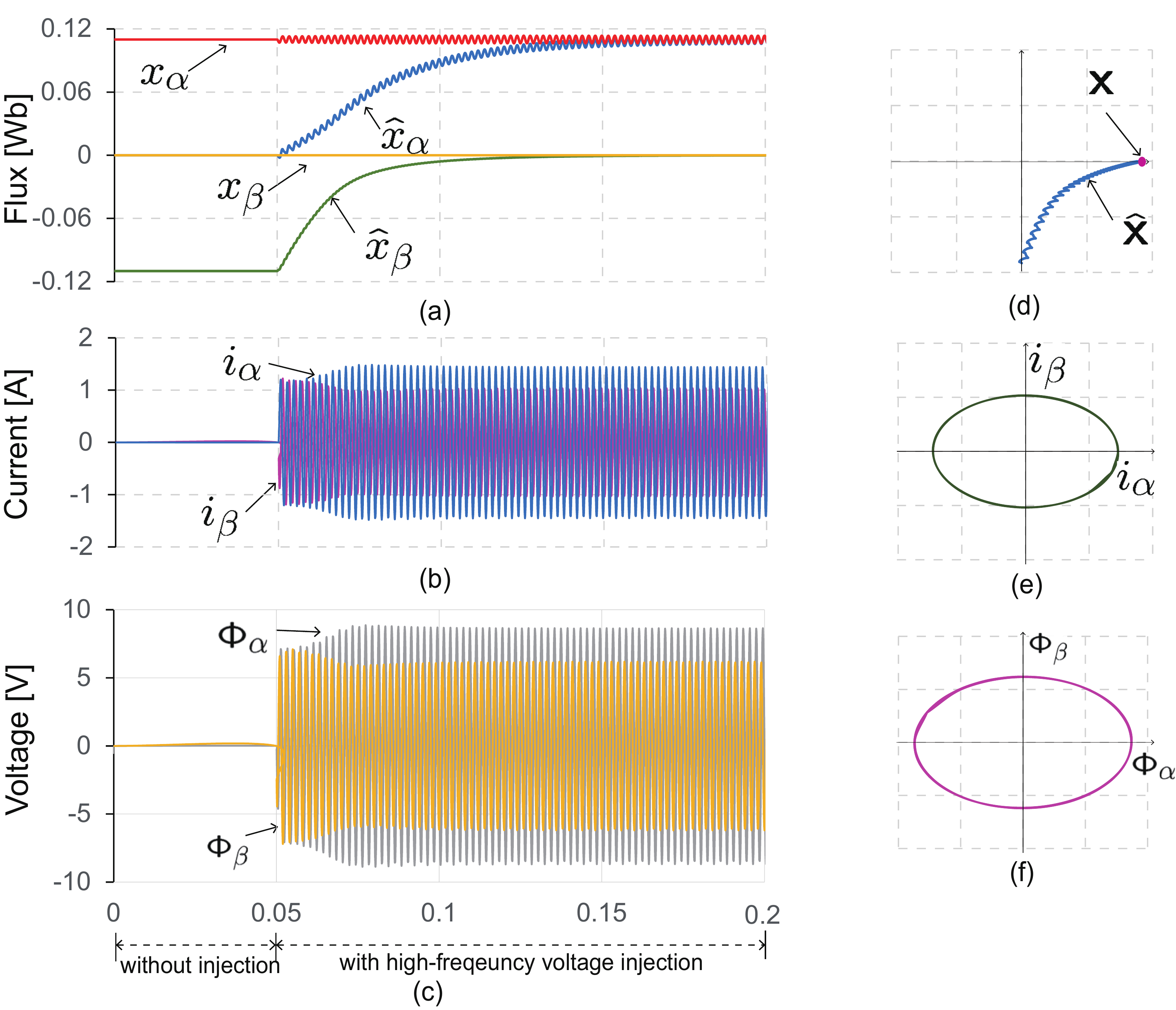}
  \caption{The performance of the observer with or without the high frequency injection at standstill: (a) active flux and its estimate; (b) stator current $\textbf{i}$; (c) regressor function $\boldsymbol{\Phi}$; (d) Lissajous curve curve of $\hat{\bx}$ and ${x}$; (e) Lissajous curve of $\textbf{i}$; (f) Lissajous curve of $\boldsymbol{\Phi}$.}
  \label{fig:add}
\end{figure}

\subsection{Experiments}
\lab{subsec62}
%
Experiments were conducted with a test bench which comprised two IPMSMs, shown in Fig. \ref{fig:test_bench}. The tested motor ran in speed control mode, using its own high-resolution shaft sensor to close the current, torque and speed loops. The other motor was coupled and ran in torque-control mode. This motor was used as the load, providing a controllable and programmable load torque. The shafts of those motors were connected via a toothed belt, which also couples the inertial wheel and secures a relatively stable speed, see Fig. \ref{fig:shaft}.

\begin{figure}[]
  \centering
  \subfigure[Experimental testing setup]{
  \includegraphics[width=0.19\textwidth]{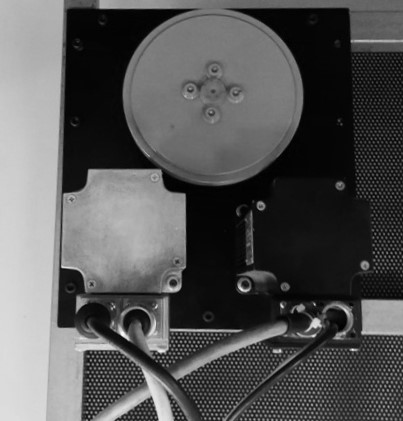}  \label{fig:test_bench}
  }
  \subfigure[Shafts connection of two motors]{
  \includegraphics[width=0.245\textwidth]{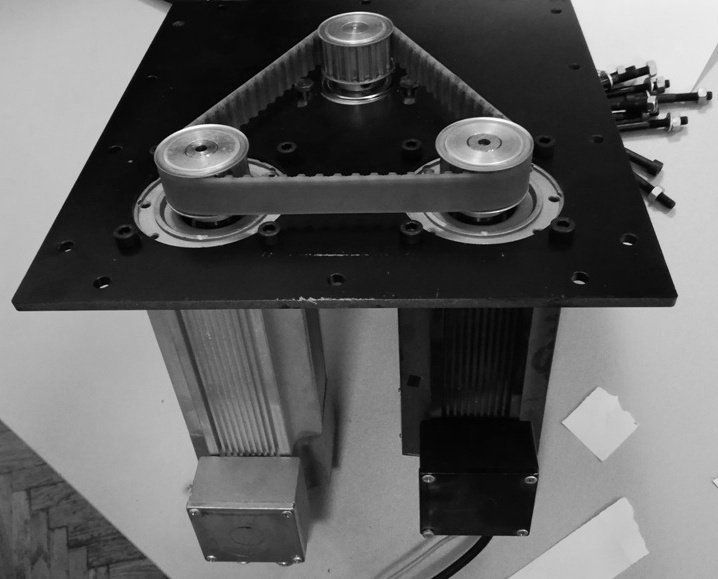}  \label{fig:shaft}
  }
\caption{Experimental testing setup and connections}
\end{figure}

The sampling time and the closed loop bandwidth of the speed loop are $T_{sw}$ = 200 $\mu$s and $f_{bw}(\omega)$ = 15 Hz, respectively. Both motors are supplied from industrial DSP-controlled inverters equipped with the real-time Ethernet link. The pulse-width modulation frequency of the IGBT inverters is set to 5 kHz, the sampling time of the digital current loop is 10 kHz, while the closed loop bandwidth of the current loop is 1.5 kHz.

\subsubsection{High speed behaviour}
\lab{subsubsec622}
%
In the first test, the motor ran at the \emph{constant} speed of about 1000 RPM, satisfying the PE condition thus without signal injection. We pulsed the torque---from zero up to the rated toque and then reducing back. Hence, the speed made small variation and settled back to the set value. We select the observer parameters as $\gamma=50$ and $\alpha=5$ with the initial condition $\hat{\lambda}(0)= [0.5~2]^\top$. Instead of measuring the load torque $T_L$ directly, we recorded the current reference $i_q^d$, which is proportional to $T_L$. The experimental results are shown in Fig. \ref{fig:test1}, which illustrates the high estimation performance at constant high speeds.

\begin{figure}[htp!]
  \centering
  \includegraphics[width=0.4\textwidth]{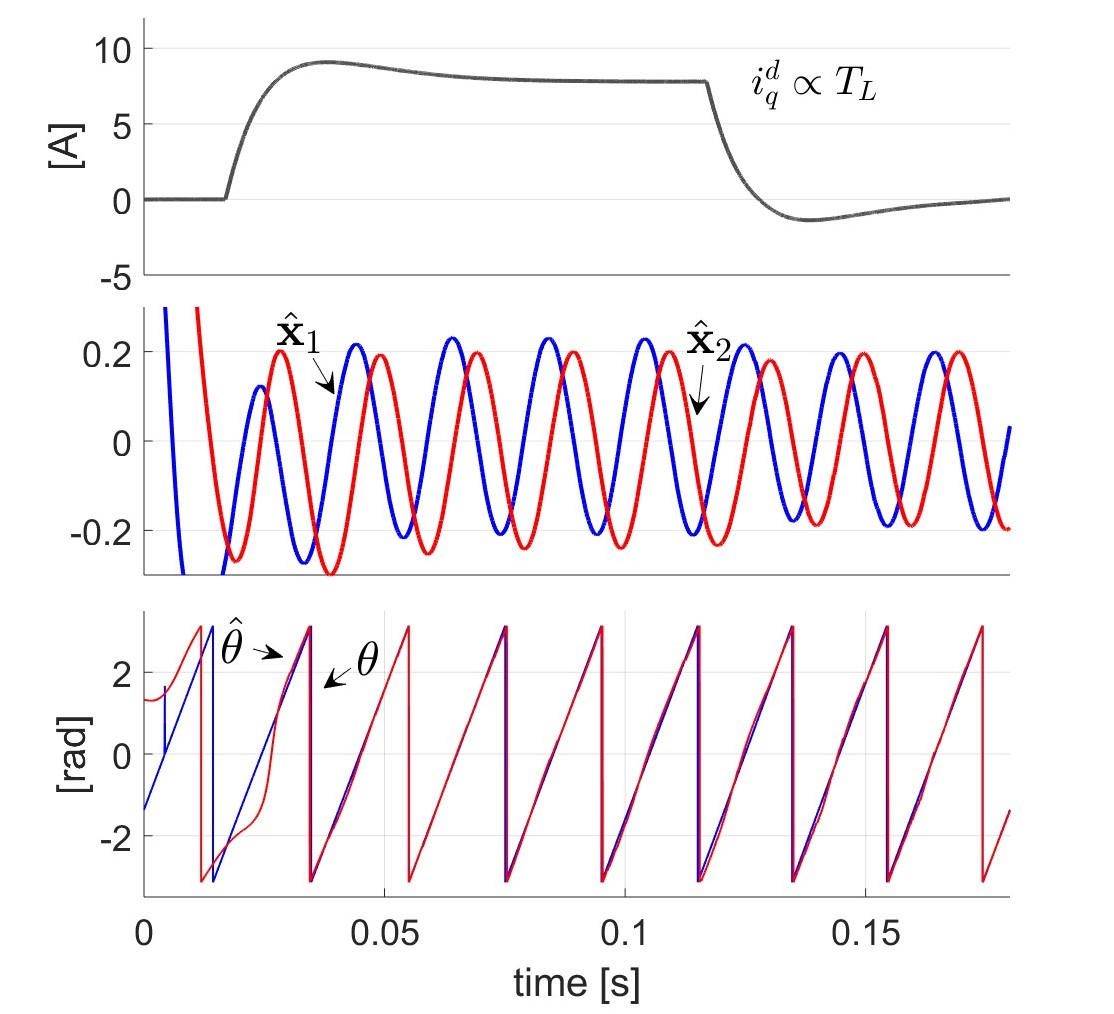}
  \caption{Angle and active flux estimates at a high constant speed (1000 RPM) under a time-varying load condition}\label{fig:test1}
\end{figure}

\subsubsection{Effect of signal injection at low speed}
\lab{subsubsec621}
%
Figs. \ref{fig:test2}-\ref{fig:test2-2} show the performance of the proposed position observer with a time-varying speed, that is, from a low-speed area to a high-speed one, then decelerating back. We select the observer parameters as $\alpha=10$ and $\gamma=50$. The difference between Fig. \ref{fig:test2} and Fig. \ref{fig:test2-2} is that in the former the motor operates {\em without} signal injection. On the other hand, in the latter figure we probed a {\em high-frequency signal} into the stator voltages of the form
\begequ
\label{signal_inj}
\bv = \bv^* +  A e^{\mathcal{J}\omega_h t}\begin{pmatrix} 1\\ 0 \end{pmatrix},
\quad
\mathcal{J} = \begin{pmatrix} 0 & -1 \\ 1 & 1 \end{pmatrix},
\endequ
where $A$ is the amplitude of the injected signals, $\omega_h$ is its frequency, and $\bv^*$ is the normal (low-frequency) control. Here, we select $A=4$ V, and $\omega_h= 2\pi\times 400$ rad/s.

In both scenarios we get excellent estimation performance at the high-speed region; however, the observer in the absence of signal injection has relatively poor performance during the intervals $[0,0.3]$ s and $[0.85,1.2]$ s---when the motor runs at low speeds. The performance improvement achieved with the signal injection in low speeds is remarkable.  We also draw the curve of currents $\bi \in \rea^2$ in Fig. \ref{fig:test2-3}. In addition, Fig. \ref{fig:test0} displays the results at a constant low speed $\omega=20$ RPM with constant load, where signal injection is applied as \eqref{signal_inj} with $A=6$ V and $\omega_h= 2\pi\times 400$ rad/s.

\begin{figure}[htp!]
  \centering
  \includegraphics[width=0.4\textwidth]{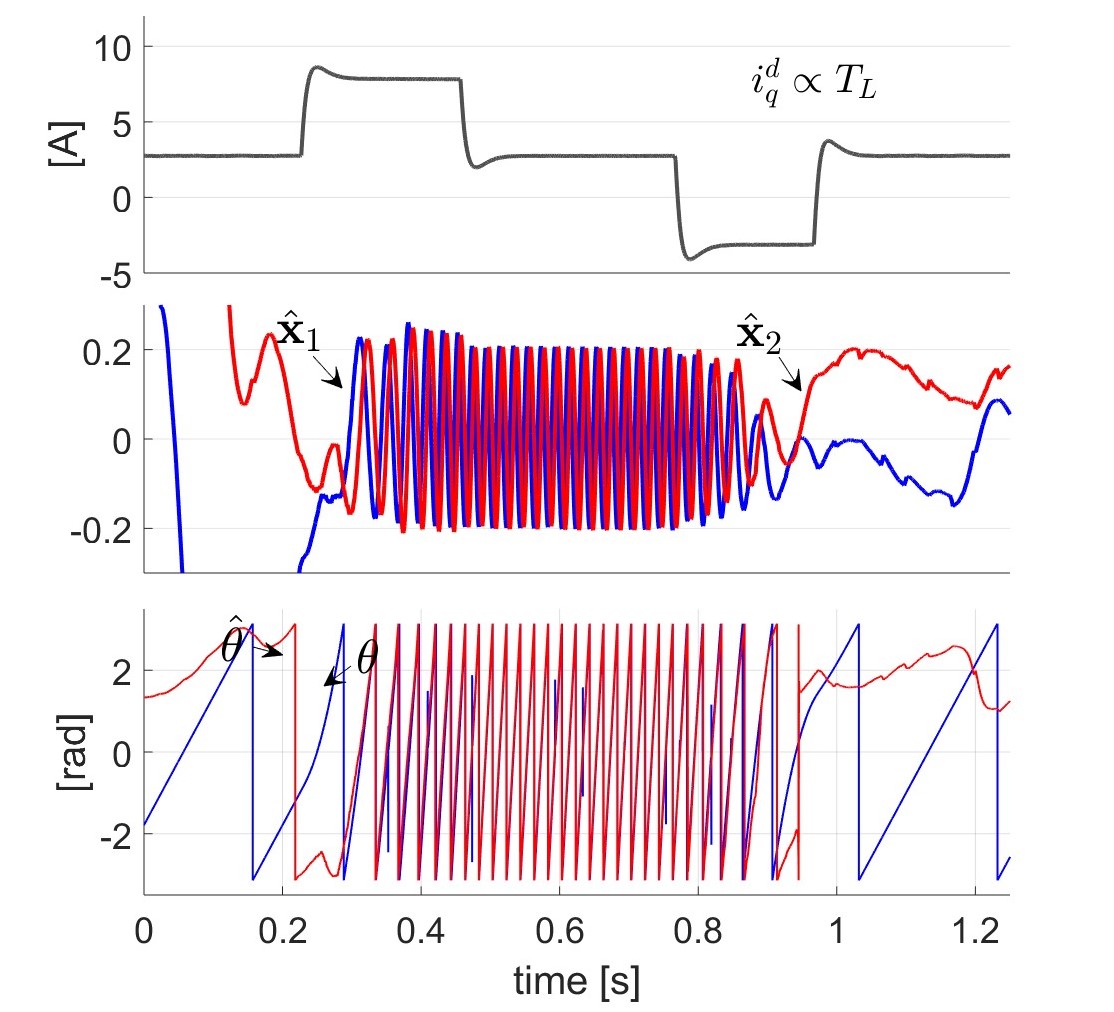}
  \caption{Angle and active flux estimates with accelerated speed (from 100 to 1000 RPM) \emph{in the absence of} high-frequency signal injection}\label{fig:test2}
\end{figure}

\begin{figure}[htp!]
  \centering
  \includegraphics[width=0.4\textwidth]{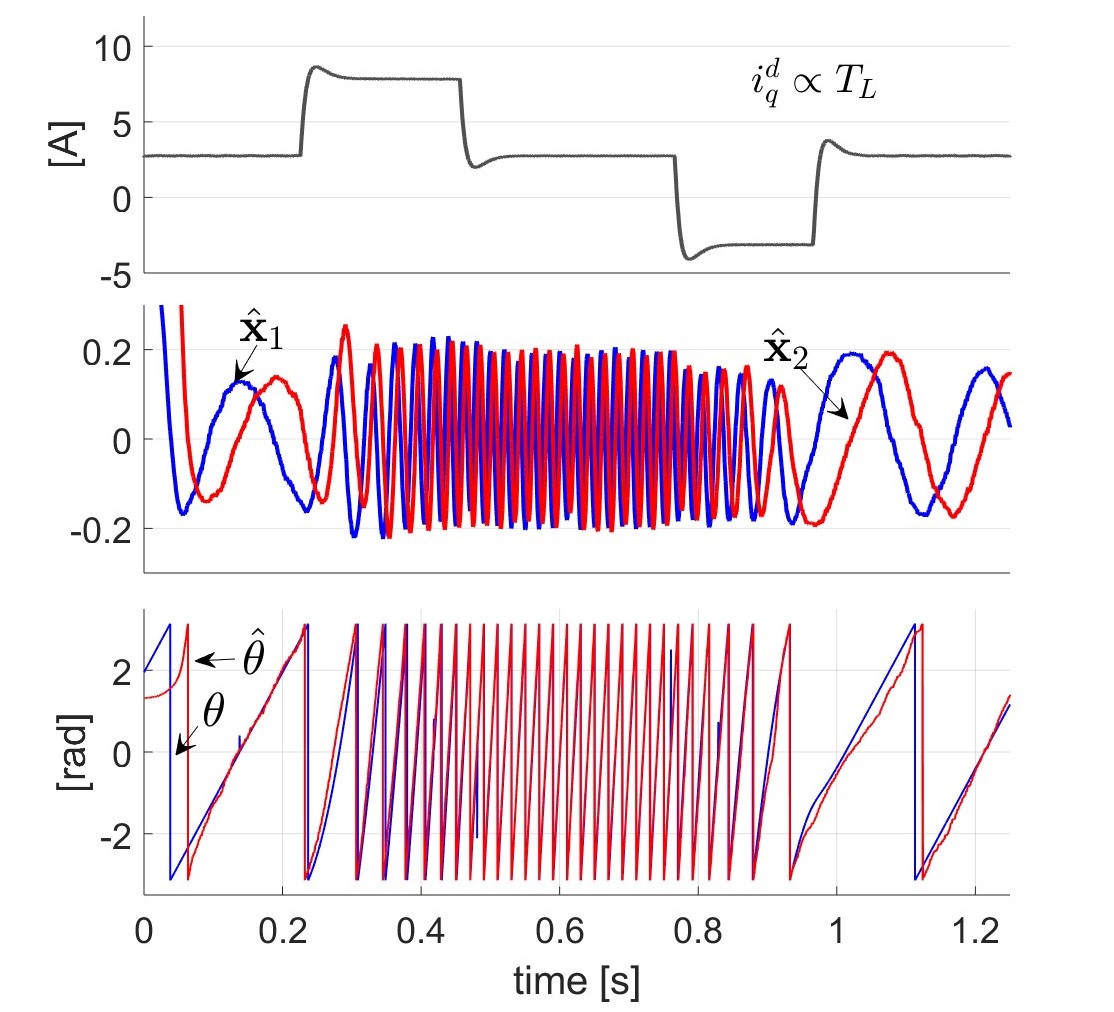}
  \caption{Angle and active flux estimates with accelerated speed (from 100 to 1000 RPM) \emph{with} high-frequency signal injection}\label{fig:test2-2}
\end{figure}

\begin{figure}[htp!]
  \centering
  \includegraphics[width=0.4\textwidth]{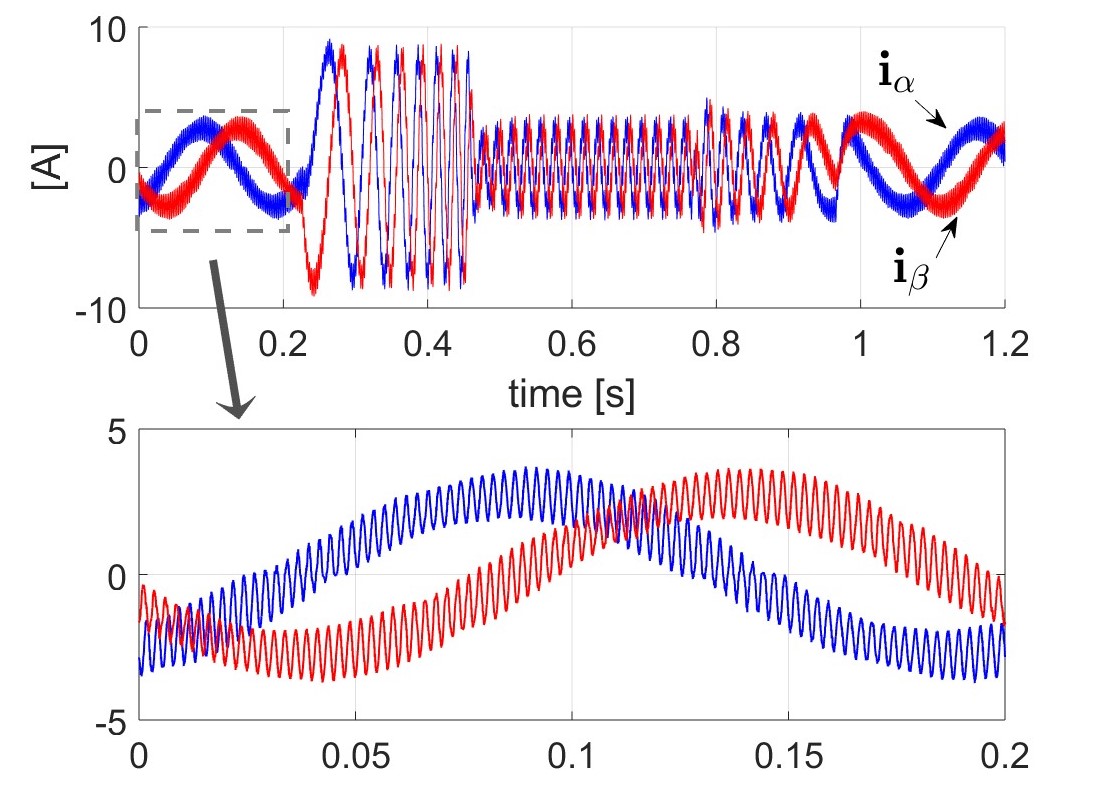}
  \caption{The currents signal $\bi$ corresponding to Fig. \ref{fig:test2-2}.}\label{fig:test2-3}
\end{figure}

\begin{figure}[htp!]
  \centering
  \includegraphics[width=0.4\textwidth]{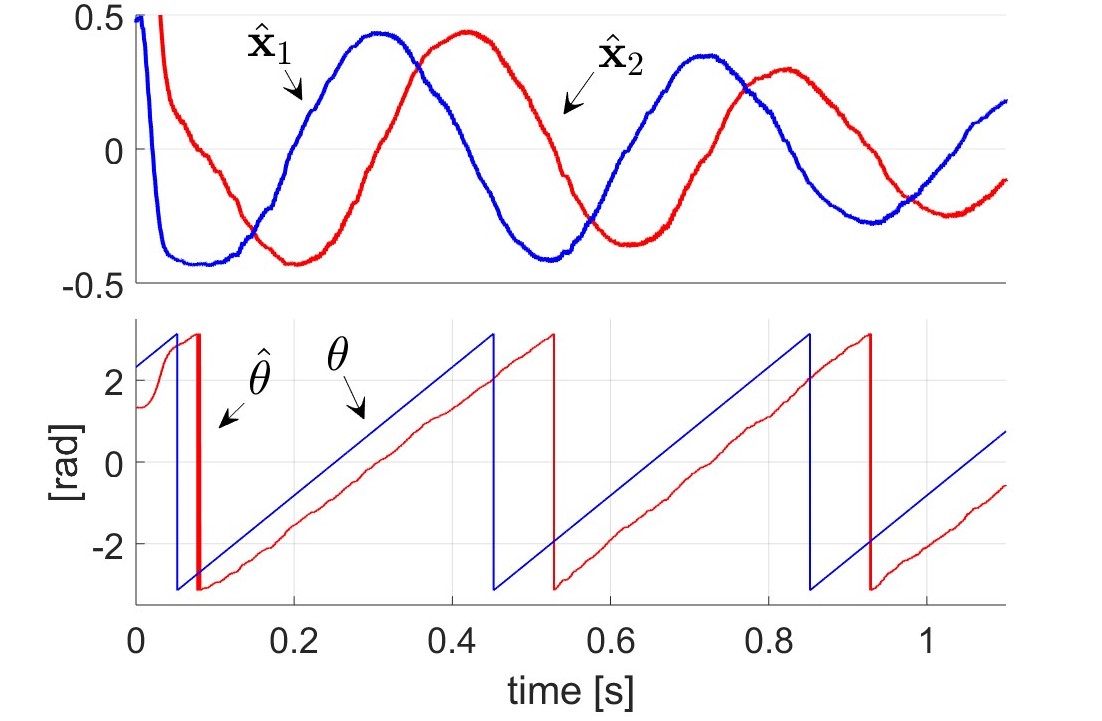}
  \caption{Angle and active flux estimates at a high constant speed (20 RPM) under constant load condition}\label{fig:test0}
\end{figure}

We also test the case of speed reversal. Fig. \ref{fig:testA-4} illustrates the estimation performance while speed is reversing from 20 RPM to -20 RPM in the presence of signal injection, verifying that the proposed position observer ensures satisfactory performance at low speeds and standstill.

\begin{figure}[htp!]
  \centering
  \includegraphics[width=0.4\textwidth]{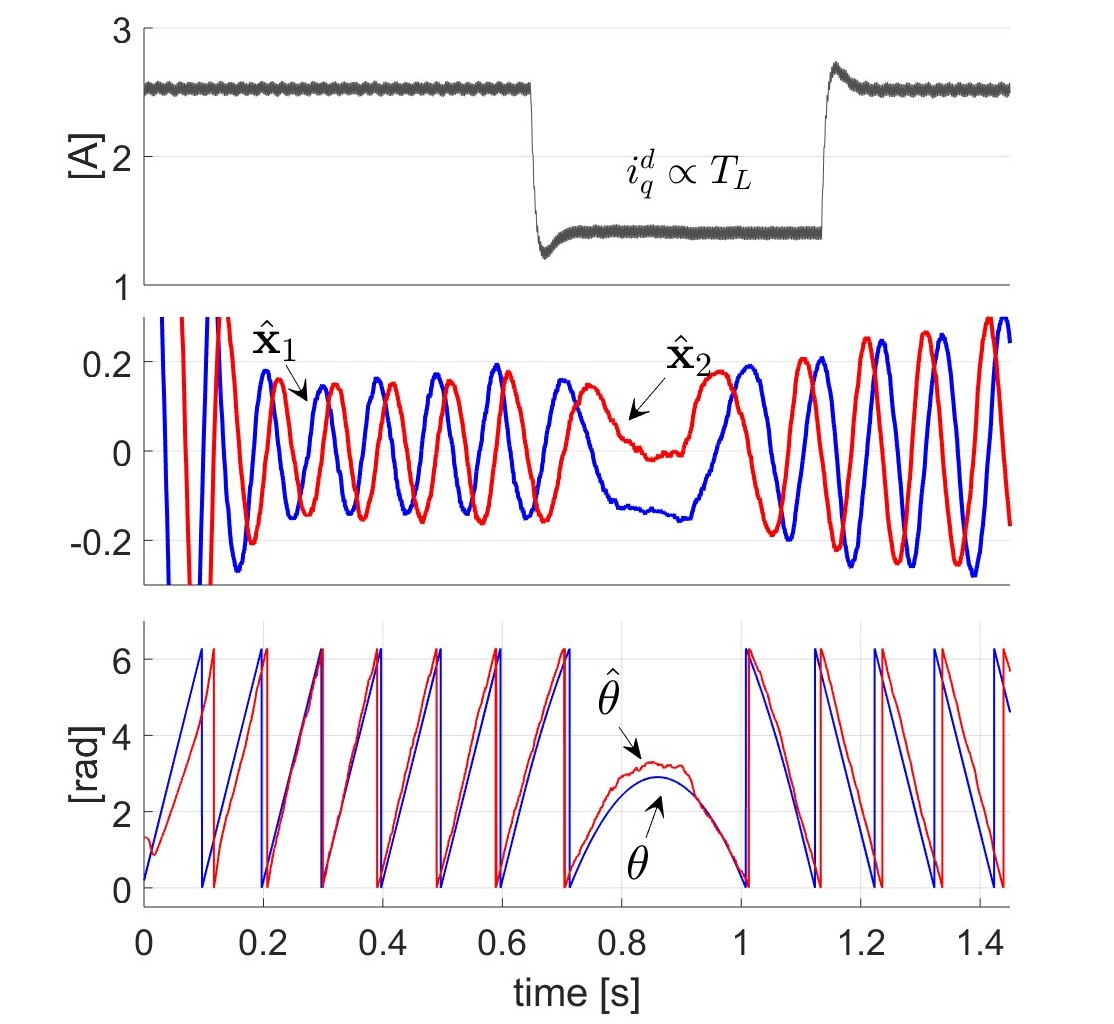}
  \caption{Angle and active flux estimates with speed reversal under pulse load condition}\label{fig:testA-4}
\end{figure}

\subsubsection{Effect of tuning parameters}
\lab{subsubsec623}
%
In the second group of experiments, we study the effect of the observer tuning parameters $\gamma$ and $\alpha$. Figs. \ref{fig:test_B-1} and \ref{fig:test_B-2} show the performance of the observer with different $\gamma$ and $\alpha$, respectively. In both cases we used the same operation mode of the one in Fig. \ref{fig:test1}. We observe that the convergence speed decreases with $\gamma$ reducing from $\gamma=50$ to $\gamma=20$, coinciding with the theoretical analysis. That is, \emph{sufficient small} $\gamma>0$ will yield slow convergence speed. Since we cannot guarantee the exponential stability with an extremely large $\gamma$, the performance degradation can be observed in Fig. \ref{fig:test_B-1} with $\gamma=200$.

\begin{figure}[htp!]
  \centering
  \includegraphics[width=0.4\textwidth]{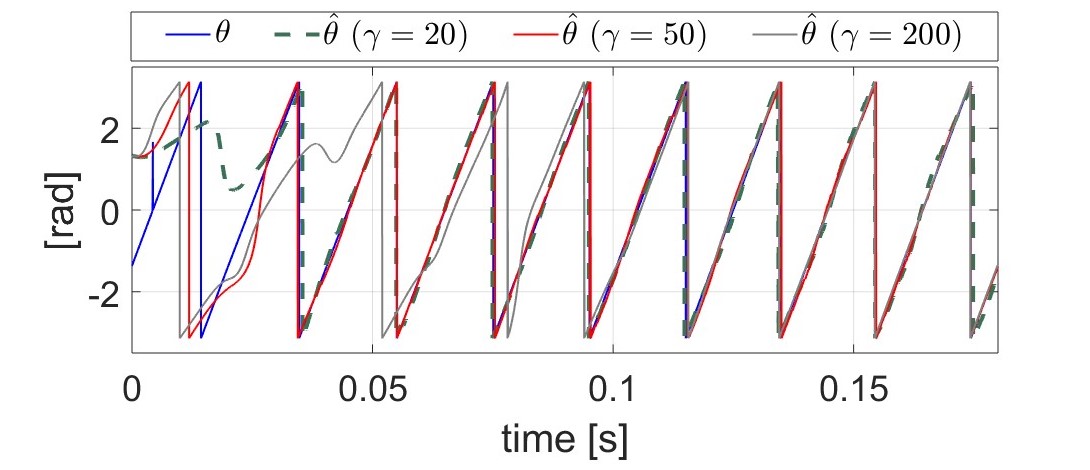}
  \caption{Angle estimates at 1000 RPM with different $\gamma$}\label{fig:test_B-1}
\end{figure}

We also present the effects from the tunable parameter $\alpha$ in Fig. \ref{fig:test_B-2}. The convergence speeds are slowed down with $\alpha$ decreasing from 5 to 2. However, a large $\alpha=10$ has some deleterious effects apparently due to increase in the signal-to-noise ratio.

\begin{figure}[htp!]
  \centering
  \includegraphics[width=0.4\textwidth]{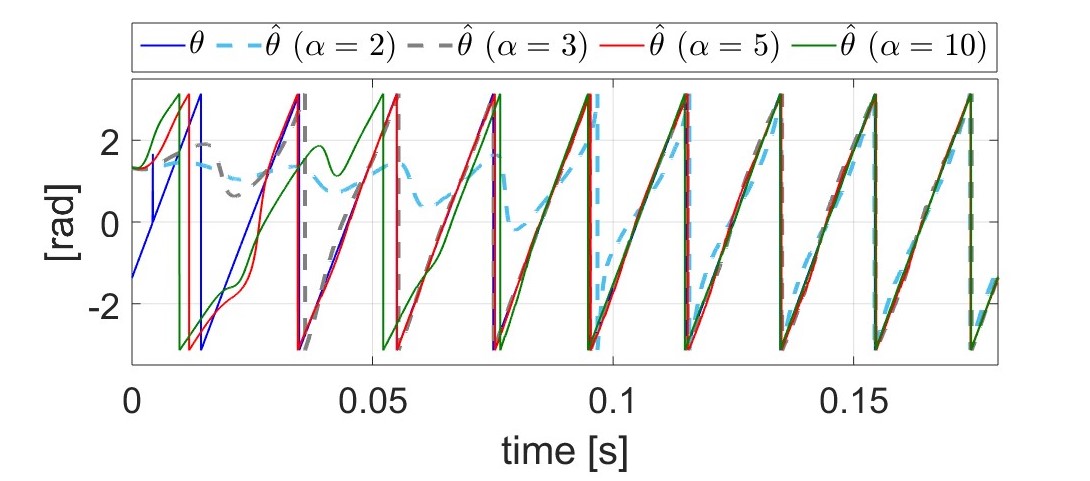}
  \caption{Angle estimates at 1000 RPM with different $\alpha$}\label{fig:test_B-2}
\end{figure}

\subsubsection{Robustness to uncertainty in the motor parameters}
\lab{subsubsec624}
%
In this subsubsection, we evaluate the sensitivity to the motor parameters of the proposed observer. In Fig. \ref{fig:test_C-1}, we use in the observer the nominal value of $R$ from Table \ref{tab:1}, as well as twice its value. It can be observed that it is almost impossible to observe any difference. We roughly conclude that the new observer design is very robust to the uncertainty of stator winding resistance $R$. Fig. \ref{fig:test_C-2} shows the robustness to the uncertainty from $L_q$ and $L_d$, where we used in the observer twice the value of $L_q$ and $L_d$ from the Table \ref{tab:1}. Despite the inaccuracy introduced in observer design, only a slight degradation is observed, thus we assume a high robustness to inductances of the proposed design. Fig. \ref{fig:test_C-3} illustrates the robustness to the PM flux linkage constant $\psi_m$, where we use the 1.5 times larger than its normal value. It has the most conspicuous performance degradation compared with other cases, though its performance is still acceptable.

\begin{figure}[htp!]
  \centering
  \includegraphics[width=0.4\textwidth]{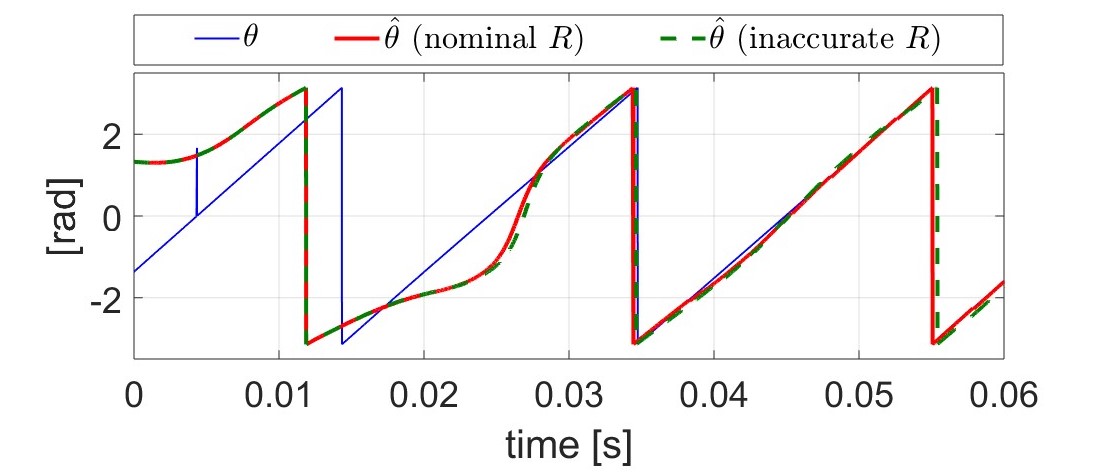}
  \caption{Sensitivity to the stator winding resistance $R$}\label{fig:test_C-1}
\end{figure}

\begin{figure}[htp!]
  \centering
  \includegraphics[width=0.4\textwidth]{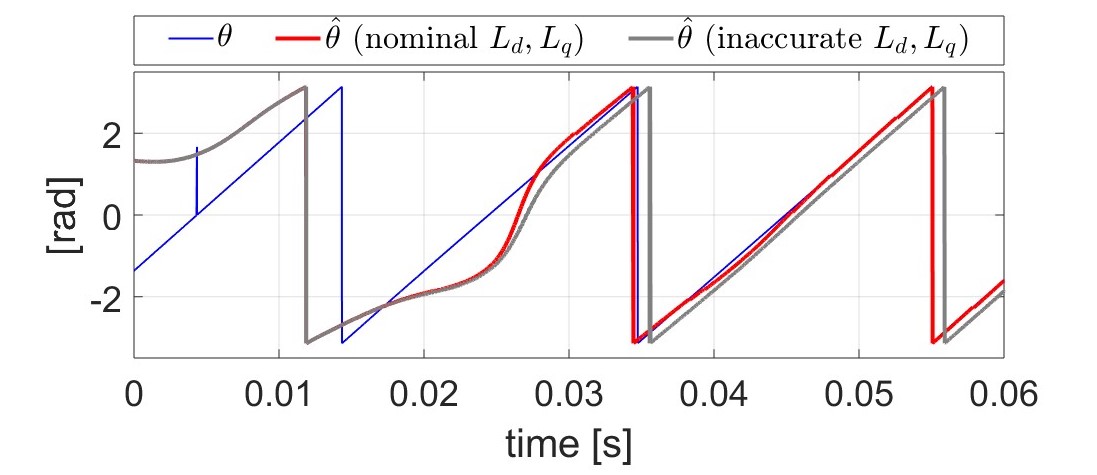}
  \caption{Sensitivity to the inductances $L_d$ and $L_q$}\label{fig:test_C-2}
\end{figure}

\begin{figure}[htp!]
  \centering
  \includegraphics[width=0.4\textwidth]{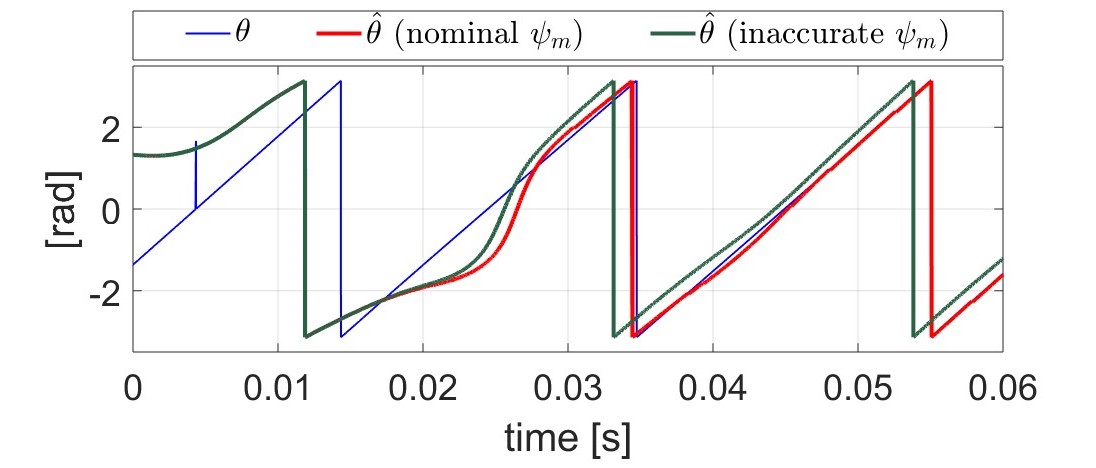}
  \caption{Sensitivity to the PM flux linkage constant $\psi_m$}\label{fig:test_C-3}
\end{figure}
\subsubsection{Comparison with other signal injection methods}
\lab{subsubsec625}
%
Finally, we compare the proposed observer with some other signal injection-based design at low speeds. In the last two decades, signal injection-based methods have been widely applied for the position estimation of IPMSMs. A {\em de facto} standard approach is, first, to probe high frequency periodic signals into the motor stator terminal; then, extract the high-frequency components of stator currents to extract position information \cite{NAMbook}. Recently, we proposed in \cite{YIetal,YItcst} a quantitative analysis to the conventional signal injection-based methods from a frequency domain viewpoint, and also proposed a new design with enhanced accuracy.

Fig. \ref{fig:test_D-1} shows the comparison between the proposed design and the high-frequency (HF) extraction method in \cite{YIetal,YItcst}.\footnote{The motor and parameters selection coincides with those in \cite{YIetal}, with signal probed only into $\bv_\alpha$. The parameters in the new design are selected as $\alpha=50$ and $\gamma=50$.} Although the performances are similar, the proposed design is much easier to tune without any compensation. The new design is also applicable to the case of high speeds, but the one in \cite{YIetal} fails.
\begin{figure}[htp!]
  \centering
  \includegraphics[width=0.40\textwidth]{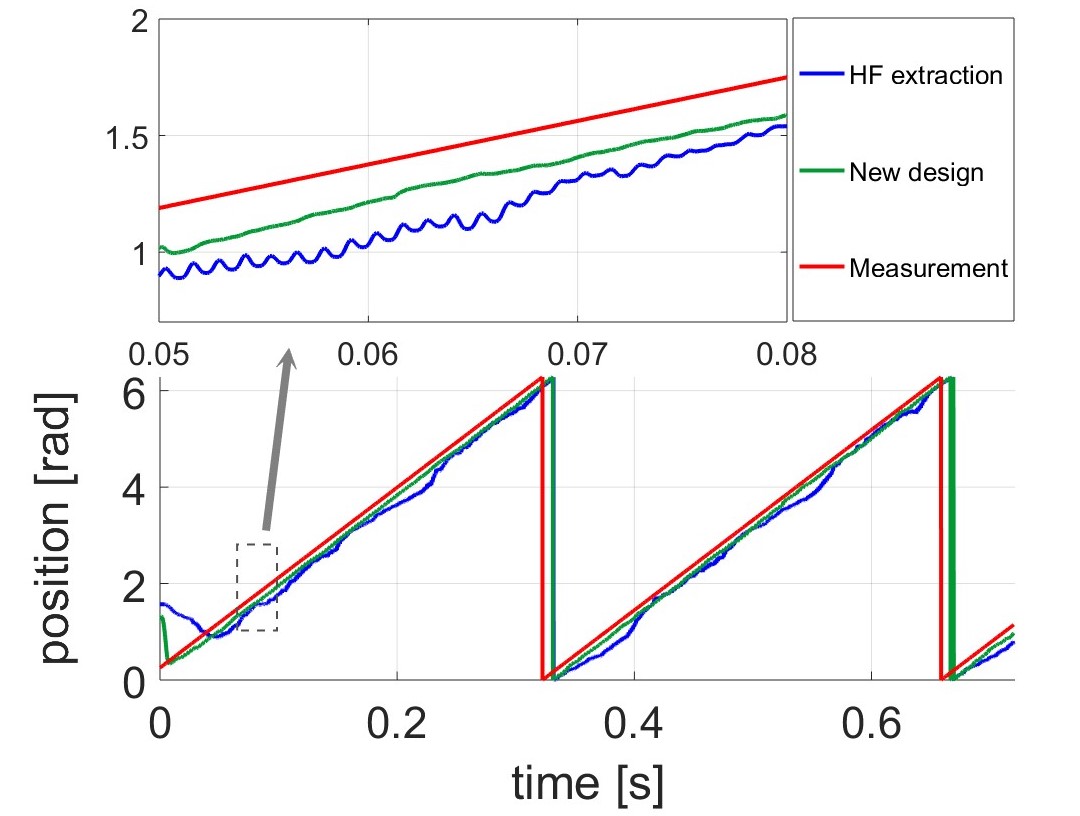}
  \caption{Comparison between the proposed observer with the one in \cite{YIetal}}\label{fig:test_D-1}
\end{figure}
%
\section{Concluding Remarks and Future Research}
\label{sec6}
%
The first GES flux/position observer for the practically important and theoretically challenging IPMSM was presented. The observer is designed following the ``linear regression plus gradient search" approach first proposed for PMSMs in \cite{ORTetalcst}, and succesfully pursued for IPMSMs in \cite{CHOetaltie,CHONAM}. GES of the observer is established, under a reasonable PE assumption, provided the filters used in the observer are ``not too fast" and the gain of the gradient search is ``sufficiently small". The latter conditions are imposed in the stability proof to be able to ``dominate" a disturbance term that appears in the GES part of the error system. Simulation results show that these conditions are not far from being necessary.

In \cite{CHONAM} the following observer is suggested
\begali{
\nonumber
\Dot{\hat \lambda} & = \bv - R\bi  + \gamma (\Phi + \Lambda) (y- \Phi^\top \hat\bx - \Hat{d})\\
\Hat{\bx} & = \hat\lambda - L_q i
\lab{newobs}
}
where
$$
\Lambda:= -\psi_m L_0 {\alpha p \over p + \alpha}\Big[
{1\over |\hat\bx|^3 }  \big( |\hat\bx|^2 I_2 - \hat\bx{\hat\bx}^\top \big) \bi
\Big].
$$
This observer is motivated by the fact that, for small $\alpha$,
$$
{\partial \over \partial \Hat{\bx}} (y- \Phi^\top \hat\bx - \Hat{d})^2=(\Phi + \Lambda) (y- \Phi^\top \hat\bx - \Hat{d}).
$$
Hence, \eqref{newobs} is a {\em bona fide} gradient search. Notice the absence of the matrix $\Lambda$ in the observer \eqref{observer} proposed in this paper. Unfortunately, the analysis of  \eqref{newobs} is a daunting task.

Current research is under way in the following directions.
\begite
\item To carry-out a simulation and experimental comparison of the observer proposed in this paper and \eqref{newobs}.
\item As the conditions of ``small" $\alpha$ and $\gamma$ of Proposition \ref{pro1} seem {\em necessary}, and this restrictions limit the transient performance of the observer, we are looking for some modifications to the observer to relax these conditions.
\item Although simulation and experimental results have shown that the observer can still be used---adding a probing signal---at standstill, further theoretical analysis is required to provide a solid foundation to this modification and, in particular, the transition from an active to a passive approach---see \cite{CHONAM} for a detailed discussion on this matter.
\item Another problem that has to be dealt with is output feedback control with the proposed observer, with needs to treat the closed-loop implementation both analytically and experimentally.
\endite

%
%

%
\begin{ack}                               
{The authors would like to thank three anonymous reviewers for their corrections and insightful remarks that helped to improve the quality of the paper.} This paper is supported by the Ministry of Science and Higher Education of the Russian Federation, project unique identifier RFMEFI57818X0271 ``Adaptive Sensorless Control for Synchronous Electric Drives in Intelligent Robotics and Transport Systems".
\end{ack}


\end{document}